\newcommand{\val}{\texttt{val}}
\newcommand{\ATmax}{\ensuremath{\mathcal{A}^T_{\max}}}
\newcommand{\Alt}{\ensuremath{\mathcal{A}_{LT}}}
\newcommand{\A}{\ensuremath{\mathcal{A}}}
\newcommand{\M}{\ensuremath{\mathcal{M}}}
\newif\ifconf
\title{Privacy-Computation trade-offs in Private Repetition and Metaselection}
\author{Kunal Talwar\footnote{\texttt{ktalwar@apple.com}}\\Apple\\}
\date{\vspace{-5ex}}
\begin{document}

\maketitle

\begin{abstract}
A Private Repetition algorithm takes as input a differentially private algorithm with constant success probability and boosts it to one that succeeds with high probability. These algorithms are closely related to private metaselection algorithms that compete with the best of many private algorithms, and private hyperparameter tuning algorithms that compete with the best hyperparameter settings for a private learning algorithm. Existing algorithms for these tasks pay either a large overhead in privacy cost, or a large overhead in computational cost. In this work, we show strong lower bounds for problems of this kind, showing in particular that for any algorithm that preserves  the privacy cost up to a constant factor, the failure probability can only fall polynomially in the computational overhead. This is in stark contrast with the non-private setting, where the failure probability falls exponentially in the computational overhead. By carefully combining existing algorithms for metaselection, we prove computation-privacy tradeoffs that nearly match our lower bounds.
\end{abstract}

\section{Introduction}

Randomized algorithms have probabilistic guarantees. Often one designs an algorithm that succeeds (for an appropriate notion of success) with constant probability.  A standard approach to boosting the success probability of a randomized algorithm is to run the algorithm with fresh randomness multiple times and take the best of the results of individual runs. Standard tail inequalities then allow for proving bounds on the success probability of the resulting algorithm. Suppose that an algorithm $\mathcal{A}$ produces an output with quality score at least $m$ with probability at least $\frac{1}{2}$.\footnote{Without loss of generality, we assume our goal is to maximize a quality score; our results can be translated immediately to a setting where want to minimize a quality score. The constant $\frac 1 2$ can be replaced by any other constant.} As an example, the algorithm of~\cite{GoemansW95} for finding a maximum cut in the graph finds a cut with value within a small constant factor of the optimum with constant probability. Let $\ATmax$ be the algorithm that runs $\mathcal{A}$ $T$ times to get outputs $o_1, \ldots, o_T$ and releases the $o_i$ with the highest quality score.
The basic \emph{repetition theorem} says that $\ATmax$ produces an output with quality score at least $m$ except with probability $\frac{1}{2^T}$. An important aspect of this result is the relationship between the failure probability and the number of repetitions: logarithmically many repetitions suffice to make the failure probability polynomially small.  In the example above, repeating the maximum cut algorithm 20 times ensures that the best cut amongst these is approximately optimal except with probability $10^{-6}$. Various versions of repetition theorems have been studied in literature, e.g. improving on the amount of randomness needed~\citep{HooryLW06}, or allowing for parallelizability in case of multiple round protocols~\citep{Raz95, Holenstein07}.

In this work, we are interested in this question when the algorithm $\mathcal{A}$ is differentially private (see \cref{sec:preliminaries} for a precise definition). We assume we have a differentially private algorithm that outputs an object (e.g. an ML model) and a score (e.g. approximate accuracy on a \emph{test} dataset), where the score is ``high'' with constant probability. Private Repetition theorems allow for boosting the success probability, while controlling the privacy cost. A simple repetition theorem uses the $\ATmax$ above. To get failure probability below $\gamma$, it would suffice to set $T=\log_2 \frac 1 \gamma$. This increases the privacy parameters (e.g. the $\eps$ in $(\eps,\delta)$-DP) of the algorithm by a factor of $T$ which may be undesirable. \ifconf Liu and Talwar~\cite{LiuT19} \else\citet{LiuT19}\fi  designed and analyzed a different algorithm for private repetition that we will refer to as \Alt\  for the rest of introduction.  \Alt\  allows for a constant ($3\times$) increase in the privacy cost, while allowing an arbitrarily small failure probability $\gamma$. However, it requires running the algorithm $\mathcal{A}$ about $O(\frac{1}{\gamma})$ times, instead of the logarithmic dependence on $\gamma^{-1}$ in the naive repetition theorem.

\begin{figure}[t]
  \begin{subfigure}{0.3\textwidth}
    \centering

\begin{tikzpicture}

  \fill[dashed, gray!30] (0,0)--(4,0)--(4,0.5)--(0,0.5)--cycle;

  \fill[dashed, gray!30] (0,0)--(0.5,0)--(0.5,4)--(0,4)--cycle;

  \draw[->] (0,0)--(4.3,0);
  \node[below] at (2,-0.5) {Privacy Overhead $c$};
  \draw[->] (0,0)--(0,4.3);
  \node[above, rotate=90] at (-1.3,2) {Computational Overhead $T$};

  \draw[dashed, gray] (0,0.5)--(4,0.5);
  \draw[dashed, gray] (0.5,0)--(0.5,4);


  \draw[-,very thin] (-0.05,0.5)--(0.05,0.5);
  \node[left] at (0,0.5) {\small{$\log_2 \frac 1 \gamma$}};
  \draw[-,very thin] (-0.05,3)--(0.05,3);
  \node[left] at (0,3) {\small{$\frac 1 {\gamma}$}};
  \draw[-,very thin] (-0.05,2.5)--(0.05,2.5);
  \node[left] at (0,2.5) {\small{$\frac 1 {\gamma^{O(1)}}$}};
  \draw[-,very thin] (0.5,-0.05)--(0.5,0.05);
  \node[below] at (0.5,0) {\small{$\Theta(1)$}};
  \draw[-, very thin] (3,0.05)--(3,-0.05);
  \node[below] at (3.3,0) {\tiny{$\log_2 \gamma^{-1}$}};
  \draw[-, very thin] (2.5,0.05)--(2.5,-0.05);
  \node[below] at (1.9,0) {\tiny{$O(\tfrac{\log\gamma^{-1}}{\log\log \gamma^{-1}})$}};

  \draw[fill=green] (3,0.5) circle (0.1cm);
  \draw[->, green] (3.6,1) -- (3.2,0.7);
  \node[right] at (3.6,1) {$\ATmax$};

  \draw[fill=green] (0.5, 3) circle (0.1cm);
  \draw[->, green] (1,3.6) -- (0.7, 3.2);
  \node[above] at (1,3.6) {\Alt};

\draw[smooth, tension=5, blue, dashed] plot (3,0.5) -- (2.2,1) -- (1.9,1.2) -- (1.52,1.52) -- (1.2,1.9) -- (1,2.2)  -- (0.5, 3);
\draw[pattern=north east lines, pattern color=red]  plot coordinates {(2.5,0.5) (1.7,1) (1.4,1.2) (1.2,1.4) (1,1.7) (0.5,2.5) (0.5,0.5) (2.5,0.5)};
\draw[red] plot[smooth, tension=1] (2.5,0.5) -- (1.7,1) -- (1.4,1.2)  -- (1.2,1.4) -- (1,1.7)  -- (0.5,2.5);
 \draw[->, blue] (2.0,1.9) -- (1.6, 1.6);
  \node[right] at (2.0, 1.9)  {\small{Hybrid (Thm. \ref{thm:LTmax})}};
 \draw[->, red] (1.3,2.6) -- (0.9, 2);
  \node[right] at (1.3,2.6) {\small{LB (Thm. \ref{thm:main_lb_coded})}};

  \end{tikzpicture}
  \end{subfigure}
  \hspace{0.15\textwidth}
   \begin{subfigure}{0.4\textwidth}
    \centering

\begin{tikzpicture}

  \fill[dashed, gray!30] (0,0)--(5,0)--(5,0.5)--(0,0.5)--cycle;

  \fill[dashed, gray!30] (0,0)--(0.5,0)--(0.5,4)--(0,4)--cycle;

  \draw[->] (0,0)--(5.3,0);
  \node[below] at (2,-0.5) {Privacy Overhead $c$};
  \draw[->] (0,0)--(0,4.3);
  \node[above, rotate=90] at (-1.3,2) {Computational Overhead $T$};

  \draw[dashed, gray] (0,0.5)--(5,0.5);
  \draw[dashed, gray] (0.5,0)--(0.5,4);

  \draw[-,very thin] (-0.05,0.5)--(0.05,0.5);
  \node[left] at (0,0.5) {\small{$K\log_2 \frac 1 \gamma$}};
  \draw[-,very thin] (-0.05,3)--(0.05,3);
  \node[left] at (0,3) {\small{$\frac K {\gamma}$}};
  \draw[-,very thin] (-0.05,2.5)--(0.05,2.5);
  \node[left] at (0,2.5) {\small{$\frac K {\gamma^{O(1)}}$}};
  \draw[-,very thin] (0.5,-0.05)--(0.5,0.05);
  \node[below] at (0.5,0) {\small{$\Theta(1)$}};
  \draw[-, very thin] (3,0.05)--(3,-0.05);
  \node[below] at (3.3,0) {\tiny{$\log_2 \gamma^{-1}$}};
  \draw[-, very thin] (2.5,0.05)--(2.5,-0.05);
  \node[below] at (1.9,0) {\tiny{$O(\tfrac{\log\gamma^{-1}}{\log\log \gamma^{-1}})$}};
  \draw[-, very thin] (4.5,0.05)--(4.5,-0.05);
  \node[below] at (4.8,0) {\small{$O(K \log \frac 1 \gamma)$}};

  \draw[fill=green] (4.5,0.5) circle (0.1cm);
  \draw[->, green] (5.2,1) -- (4.7,0.7);
  \node[right] at (5.2,1) {$\ATmax$};

  \draw[fill=green] (0.5, 3) circle (0.1cm);
  \draw[->, green] (1,3.6) -- (0.7, 3.2);
  \node[above] at (1,3.6) {\Alt};

  \draw[fill=blue] (3,0.5) circle (0.1cm);

\draw[smooth, tension=5, blue, dashed] plot (3,0.5) -- (2.2,1) -- (1.9,1.2) -- (1.52,1.52) -- (1.2,1.9) -- (1,2.2)  -- (0.5, 3);
\draw[pattern=north east lines, pattern color=red]  plot coordinates {(2.5,0.5) (1.7,1) (1.4,1.2) (1.2,1.4) (1,1.7) (0.5,2.5) (0.5,0.5) (2.5,0.5)};
\draw[smooth, tension=5, red] plot (2.5,0.5) -- (1.7,1) -- (1.4,1.2)  -- (1.2,1.4) -- (1,1.7)  -- (0.5,2.5);
 \draw[->, blue] (2.7,1.7) -- (1.6, 1.6);
 \draw[->, blue] (2.7,1.7) -- (3,0.7);
  \node[above] at (3.2, 1.7)  {\small{Hybrid (Thm. \ref{thm:LTmax})}};
 \draw[->, red] (1.3,2.6) -- (0.9, 2);
  \node[right] at (1.3,2.6) {\small{LB (Thm. \ref{thm:main_lb_hyperparameter})}};

  \end{tikzpicture}
        \end{subfigure}
  \caption{Existing and new results on Private Repetition (left) and Private Hyperparameter tuning and Metaselection (right). Green dots show the existing upper bounds $\Alt$ and $\ATmax$ discussed above, and the gray regions depict the existing excluded regions. The red line shows our new lower bounds: we exclude the full region below the red line. The lower bound nearly matches the blue dashed line that depicts our analysis of hybrid algorithms.}\label{fig:results}
\end{figure}
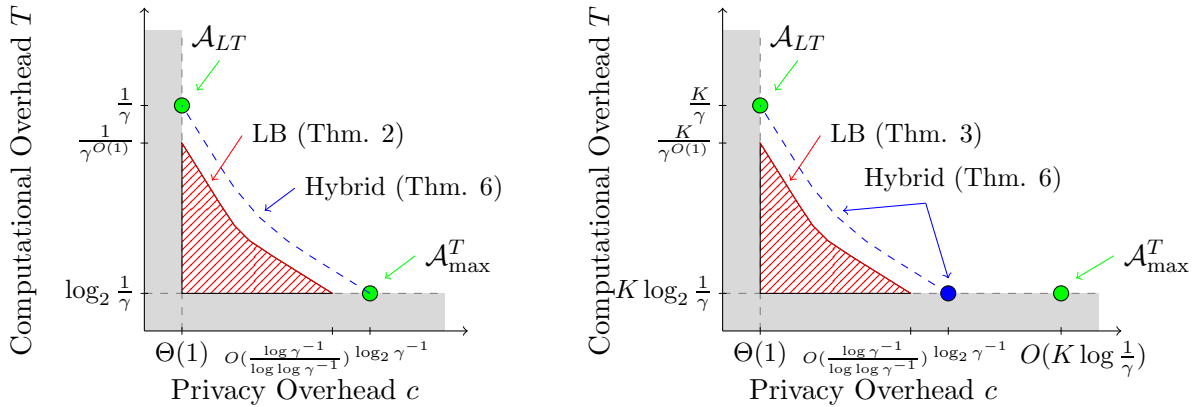

In other words, existing private repetition theorems either pay a $\log \frac 1 \gamma$ overhead in the privacy cost, or a $\poly(\frac 1 \gamma)$ overhead in the run time. A similar dichotomy exists for private hyperparameter selection, where one wants to privately select the best model amongst those resulting from different hyperparameter choices. Indeed private hyperparameter tuning is one of the standard uses of private repetition theorems ~\citep{LiuT19, PapernotS22} and the \Alt\  algorithm has been used in practice recently by Israel’s Ministry of Health in their release of 2014 live births microdata~\citep{HodC24}. The private learning algorithms for a single hyperparameter setting can can be computationally expensive to run.
As an example, in \ifconf Hod and Canetti~\cite{HodC24}\else\citet{HodC24}\fi, the algorithm $\mathcal{A}$ can involve training a CTGAN~\citep{XuSCV19,YJdS19,rosenblatt2020differentially} using DPSGD~\citep{DLDP} or PATE~\citep{PapernotAEGT17,PapernotSMRTE18} for a certain set of hyperparameters. In such cases, the $\poly(\frac 1 \gamma)$ run time overhead can be quite significant.

As we show in~\cref{sec:tradeoffs}, the two algorithms above can be combined to give a smooth trade-off between the privacy cost overhead and the computational overhead. It is natural to ask however if such a trade-off is necessary: is there a way to keep the privacy overhead to constant while paying only logarithmically in the run time?

In this work, we give a negative answer to this question. We show that for any algorithm with constant privacy overhead, the run time must necessarily be polynomial in $\frac{1}{\gamma}$. More generally, our lower bound shows that the trade-off between the computational overhead  (denoted by $T$ below) and the privacy overhead (denoted by $c$ below) is nearly tight.

\begin{theorem*} (Informal version of \cref{thm:main_lb_coded})
Let $\mathcal{A}$ be an oracle algorithm that satisfies the following properties:
\begin{description}
    \item{Privacy:} For any $\eps$-DP mechanism $\M : D^n \rightarrow (R \times \Re)$, $\mathcal{A}^{\M}$ is $(c\eps,\delta)$-DP.
    \item{Utility:} For any input $d$, let $m$ be such that $\Pr[\val(\M(d)) \geq m] \geq \frac 1 2$. Then $\Pr[\val(\mathcal{A}^\M(d)) \geq m] \geq 1-\gamma$.
    \item{Runtime:} $\mathcal{A}$ on input $d$ makes at most $T$ calls to $\mathcal{M}$ (on any inputs) in expectation.
\end{description}
Then for $\delta < O(\frac {\eps\gamma} {\ln T})$, $T \geq \gamma^{\Omega(1/c)}$, or equivalently, $c \geq \Omega(\frac{\ln \frac 1 {\gamma}}{\ln T})$.
\end{theorem*}
These results are shown pictorially in~\cref{fig:results} (left). The two green filled dots denote the trade-offs achieved by \Alt\  and \ATmax. The blue dashed line is the upper bound achieved by combining these algorithms (\cref{sec:tradeoffs}). The grey shaded region can be excluded by straightforward arguments. The lower bound above excludes the red striped region. In particular, it shows that for any constant $c$, $T$ must be at least $\poly(\frac{1}{\gamma})$.
At the other extreme, any $T$ that is polylogarithmic in $\frac 1 \gamma$ requires $c \geq \Omega(\frac{\ln \frac 1 {\gamma}}{\ln \ln \frac 1 \gamma})$. These results hold even when the target value $m$ is public.

We extend our results to the setting where the target value $m$ is achieved by the mechanism $\M$ with a probability that is different from $\frac{1}{2}$. More generally, in the hyperparameter tuning setting, there are $K$ possible settings of hyperparameters, and our utility bound asks that the algorithm be competitive with the median value of the best hyperparameter setting.
\begin{theorem*} (Informal version of~\cref{thm:main_lb_hyperparameter})
Let $\mathcal{A}$ be an oracle algorithm that satisfies the following properties:
\begin{description}
    \item{Privacy:} For any $\eps$-DP mechanism $\M : D^n \times [K] \rightarrow (R \times \Re)$, $\mathcal{A}^{\M} : D^n \rightarrow (R \times \Re)$ is $(c\eps,\delta)$-DP.
    \item{Utility:} For any input $d$, let $m$ be such that $\Pr_{\M}\Pr_{k \sim [K]}[\val(\M(d,k)) \geq m] \geq \frac{1}{2K}$. Then $\Pr[\val(\mathcal{A}^\M(d)) \geq m] \geq 1-\gamma$.
    \item{Runtime:} $\mathcal{A}$ on input $d$ makes at most $T$ calls to $\mathcal{M}$ (on any inputs) in expectation.
\end{description}
Then for $\delta < O(\frac {\eps\gamma} {\ln T})$, $T \geq  K\gamma^{\Omega(\frac 1 c)}$, or equivalently, $c \geq \Omega(\frac{\ln \frac 1 {\gamma}}{\ln (T/K)})$.
\end{theorem*}
We show these results in ~\cref{fig:results} (right). As before, the two green filled dots denote the trade-offs achieved by \Alt\  and \ATmax. Note the here $\ATmax$ is significantly worse, and the better upper bounds is achieved by combining \Alt\  with $\ATmax$. As before the grey shaded region is excluded by straightforward arguments, and the new lower bound exclude the red striped region. In particular, it shows that for any constant $c$, $T$ must be at least $K \cdot \poly(\frac{1}{\gamma})$.
At the other extreme, making $T/K$ polylogarithmic in $\frac 1 \gamma$ requires $c \geq \Omega(\frac{\ln \frac 1 {\gamma}}{\ln \ln \frac 1 \gamma})$. As before, the lower bound holds for known target value $m$.

This has strong implications for private hyperparameter tuning. Indeed our results imply that any private hyperparameter tuning algorithm that boosts the success probability to $(1-\gamma)$ while paying only a constant overhead in privacy cost must pay a $K\cdot\poly(\gamma^{-1})$ computational overhead.

We remark that both the upper bounds translate $\eps$-DP algorithms to $c\eps$-DP algorithms, and $(\eps,\delta)$-DP algorithms to $(c\eps, T\delta)$-DP algorithms; similar results hold for Renyi DP algorithms as well~\citep{PapernotS22}. Our lower bounds apply even when the input algorithm is $\eps$-DP, and the output algorithm is allowed to be $(\eps,\delta)$-DP  (or Renyi DP).

\subparagraph*{Other Related Work:} \ifconf Gupta \emph{et al.}~\cite{GuptaLMRT10} \else\citet{GuptaLMRT10}\fi first proved a repetition theorem for the case when the target threshold is known and the value is a low-sensitivity function. Their result required $T=O(1/\gamma^2)$ for repetition and $T=O(K^2/\gamma^2)$ for metaselection. As discussed, the random stopping approach of \citep{LiuT19} improves these to $T=O(1/\gamma)$ and $T=O(K/\gamma)$ without any restrictions. Subsequently, ~\cite{PapernotS22} studied these questions under Renyi Differential privacy, and showed positive results for a variety of distributions of stopping times.

\subparagraph*{Decision vs. Optimization:} Repetition theorems in complexity theory are normally stated for decision problems. This is justified by the fact that one can reduce optimization problems to decision problems with logarithmic (in the range) overhead. However a logarithmic overhead in privacy cost would be quite significant as one typically wants $\eps$ to be a small constant. For this reason, all the above works~\citep{GuptaLMRT10, LiuT19, PapernotS22} directly address optimization problems. As decision problems can be reduced to optimization, our results also apply to decision problems.

\subparagraph*{The need for Private Repetition:} Private algorithms come in many flavors. In many cases, the algorithm itself or a slight modification can be better analyzed to make its failure probability small, at a small cost to the utility criteria. For example, the Laplace mechanism~\cite{DworkMNS06j} and the exponential mechanism~\citep{McsherryT07} can directly give utility bounds that hold with probability $(1-\gamma)$ for any $\gamma$. In these cases, we do not need private repetition algorithms that use a base algorithm as an oracle. However, for many other private algorithms, we do no have such sliding scale guarantees and one can only control the expected utility, or median utility. This is often the case when the randomization plays a role in ``non-private'' part of the algorithm, e.g. in the use of tree embeddings~\cite{GuptaLMRT10, FeldmanMST24}, hashing~\cite{BassilyNST20} or other more custom analyses~\cite{GuptaLMRT10, DinitzKLV25}. Finally, for some problems, (private) algorithms work well in practice but may not have theoretical guarantees (or work much better than their theoretical guarantees) and one does not have a knob to tune the failure probability. For these two kinds of algorithms, private repetition theorems are an invaluable tool for reducing the failure probability.

Metaselection and hyperparameter tuning algorithms are even more broadly useful. It is fairly common to have algorithms that take a hyperparameter as input and work well either provably or empirically. As an example, algorithms in the Propose-test-release framework~\citep{DworkL09} work well when the proposal is true for the distribution. In cases when the proposal can be parameterized (e.g. in~\citep{BrownHHKLOPS24}), it becomes a hyperparameter. Practical private optimization algorithms (e.g.~\citep{DLDP}) typically involve multiple hyperparameters. Often the hyperparameters cannot be well-tuned on public proxy datasets, and can be a source of leakage of private information~\cite{PapernotS22}. Private Hyperparameter tuning algorithms are practically used in such settings and the computational cost can be a significant concern.

\section{Preliminaries}
\label{sec:preliminaries}

Differential Privacy~\citep{DworkMNS06j} constrains the distribution of outcomes on neighboring datasets to be close. The Hamming distance $|d-d'|_H$ between two datasets $d,d' \in \mathcal{D}^n$ is the number of entries in which they differ. For our purposes, two datasets $d, d' \in \mathcal{D}^n$ are \emph{neighboring} if they differ in one entry, i.e. if their Hamming distance is 1.
\begin{definition}
    A randomized algorithm $\M : \mathcal{D}^n \rightarrow \mathcal{R}$ is $(\eps,\delta)$-differentially private (DP) if for any pair of neighboring datasets $d, d'$ and any $S \subseteq \mathcal{R}$,
    \begin{align*}
        \Pr[\M(d) \in S] & \leq e^{\eps} \cdot \Pr[\M(d') \in S] + \delta.
    \end{align*}
    We abbreviate $(\eps,0)$-DP as $\eps$-DP and refer to it as pure differential privacy.
\end{definition}
Sometimes it is convenient to define a mechanism on a subset of inputs, and establish its privacy properties on this restricted subset. The following extension lemma due to \ifconf Borgs \emph{et al.}~\cite{BorgsCSZ18}\else ~\citet{BorgsCSZ18}\fi shows that any such partial specification can be extended to the set of all datasets at a small increase in the privacy cost.
\begin{proposition}[Extension Lemma]
\label{prop:extension}
Let $\M_B$ be an $\eps$-differentially private algorithm restricted to $B \subseteq D^n$ so that for any $d, d' \in B$, and any $S \subseteq R$, $\frac{\Pr[\M_B(d) \in S]}{\Pr[\M_B(d') \in S]} \leq \exp(\eps \cdot |d-d'|_H)$. Then there exists a randomized algorithm $\M$ defined on the whole input space $D$ which is $2\eps$-differentially private and satisfies that for every $d \in B$, $\M(d)$ has the same distribution as $\M_B(d)$.
\end{proposition}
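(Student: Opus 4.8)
The plan is to give an explicit construction of the extended mechanism $\M$ as a renormalized pointwise infimum of the base distributions, each one inflated by the appropriate group-privacy factor, and then to check the three required properties (that $\M$ is well defined, that it agrees with $\M_B$ on $B$, and that it is $2\eps$-DP) essentially by the triangle inequality for Hamming distance.

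First I would reduce to a clean density computation. The hypothesis $\Pr[\M_B(d)\in S]\le e^{\eps|d-d'|_H}\Pr[\M_B(d')\in S]\le e^{n\eps}\Pr[\M_B(d')\in S]$ for all $d,d'\in B$ forces the distributions $\{\M_B(b):b\in B\}$ to be mutually absolutely continuous, so I can fix an arbitrary $b_0\in B$, set $\lambda:=\M_B(b_0)$, and write $p_b:=d\M_B(b)/d\lambda$, so that $p_b\le e^{\eps|b-b_0|_H}$ holds $\lambda$-a.e. For each $d$ I define the nonnegative measurable function $g_d:=\inf_{b\in B}e^{\eps|d-b|_H}p_b$ — read as an essential infimum with respect to $\lambda$, which is legitimate since $|d-b|_H$ ranges over the finite set $\{0,\dots,n\}$ and the family is uniformly bounded — and I set $Z_d:=\int g_d\,d\lambda$. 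Writing $r(d):=\min_{b\in B}|d-b|_H$ and letting $\pi(d)$ be a nearest point of $B$ to $d$, the Hamming triangle inequality gives $e^{-\eps r(d)}p_{\pi(d)}\le g_d\le e^{\eps r(d)}p_{\pi(d)}$ a.e., hence $Z_d\in[e^{-\eps r(d)},e^{\eps r(d)}]\subseteq(0,\infty)$, so $\M(d)$ can be defined as the distribution with $\lambda$-density $g_d/Z_d$.

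Next I would verify the two remaining properties. \emph{Agreement on $B$}: for $d\in B$ the hypothesis is exactly the statement $p_d\le e^{\eps|d-b|_H}p_b$ a.e. for every $b\in B$, so the infimum defining $g_d$ is attained (a.e.) at $b=d$, giving $g_d=p_d$, $Z_d=1$, and $\M(d)=\M_B(d)$. \emph{Privacy}: fix neighbors $d_0\sim d_1$, so $|d_0-d_1|_H=1$ and hence $\big||d_0-b|_H-|d_1-b|_H\big|\le 1$ for every $b$; comparing the two infima term by term yields $e^{-\eps}g_{d_1}\le g_{d_0}\le e^{\eps}g_{d_1}$ a.e., and integrating gives $e^{-\eps}\le Z_{d_1}/Z_{d_0}\le e^{\eps}$. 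Therefore $g_{d_0}/Z_{d_0}\le e^{\eps}g_{d_1}/Z_{d_0}\le e^{2\eps}g_{d_1}/Z_{d_1}$ a.e., i.e. $\Pr[\M(d_0)\in S]\le e^{2\eps}\Pr[\M(d_1)\in S]$ for every measurable $S$, and the reverse inequality follows by symmetry. Since every pair of neighboring datasets is covered, $\M$ is $2\eps$-DP (with $\delta=0$), which finishes the argument.

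The only genuine subtlety I anticipate is measurability of the pointwise infimum when $B$ (equivalently $D$) is uncountable — which is precisely why the argument is phrased in terms of densities against the single reference measure $\lambda$ and uses the essential infimum, so that every displayed inequality becomes an a.e. statement that goes through unchanged; alternatively one may first reduce to a countable data universe. Beyond that, the proof is bookkeeping with the Hamming triangle inequality and the normalizers $Z_d$, and the factor of $2$ in the privacy loss enters exactly once, when the bound $Z_{d_1}/Z_{d_0}\le e^{\eps}$ is multiplied into the pointwise bound $g_{d_0}\le e^{\eps}g_{d_1}$.
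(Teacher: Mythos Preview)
The paper does not actually prove this proposition; it is quoted as a known result due to \cite{BorgsCSZ18}. Your argument is correct and is precisely the construction used there: define the extended density as the normalized pointwise infimum $g_d=\inf_{b\in B}e^{\eps|d-b|_H}p_b$, check that $g_d=p_d$ on $B$ by the assumed ratio bound, and obtain $2\eps$-DP by combining the termwise bound $e^{-\eps}g_{d_1}\le g_{d_0}\le e^{\eps}g_{d_1}$ with the corresponding bound on the normalizers, each costing one factor of $e^{\eps}$.
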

We will be interested in private mechanisms that output a tuple $(o,v) \in \mathcal{R} \subset \mathcal{R}_o \times \mathbb{R}$ where $\mathcal{R}_o$ is some arbitrary output space. For example, if the mechanism is computing an approximate maximum cut in a graph, $o$ could be a subset of vertices of a graph and $v$ the (approximate) number of edges in the input graph that leave this subset $o$. In the case of hyperparameter tuning, $o$ would be a model and $v$ an estimate of its accuracy. We denote by $\val(\M(d))$ the real value $v$ when $\M(d) = (o,v)$. For a mechanism $\M : D^n \rightarrow \mathcal{R}_o \times \mathbb{R}$, we let $Median(\M(d)) = \sup\{z : \Pr_{(o,v) \sim \M(d)}[v \geq z] \geq \frac 1 2\}$ denote the median of $v$ when $(o,v) \sim \M(d)$. A private amplification algorithm takes as input a dataset $d$ and a failure probability $\gamma$, and given oracle access to a mechanism $\M$, outputs an $(o,v) \in \mathcal{R}$ such that $v \geq Median(\M(d))$ with probability $(1-\gamma)$. Here this probability is taken over both the internal randomness of the amplification algorithm as well as the randomness in $\M$. A private amplification algorithm may need to make $T$ oracle calls to $\M$, and may be $c\eps$-DP whenever $\M$ is $\eps$-DP. Our goal is to understand the trade-off between the computational overhead $T$ and the privacy overhead $c$.

We will assume that $\mathcal{A}$ can only output an $(o,v)$ tuple that is produced by a run of $\M$ on some input. This is a natural restriction in all settings, and is needed to make the problem meaningful.\footnote{E.g., a trivial algorithm that computes $(o,v) \sim \M(d)$ and outputs  $(o,\infty)$ would not be useful in any application.} Thus in this paper, we will always make the assumption that the oracle algorithm $\mathcal{A}$ selects a tuple $(o,v)$ from the outputs it receives from calls to $\M$.

A private metaselection algorithm operates on a set of $K$ private algorithms $\M_1, \ldots, \M_K$, where each $\M_i : \mathcal{D}^n \rightarrow \mathcal{R} \subset \mathcal{R}_o \times \mathbb{R}$. It takes as input a dataset $d$ and a failure probability $\gamma$, and given oracle access to mechanisms $\M_i$, outputs an $(o,v) \in \mathcal{R}$ such that $v \geq \max_i Median(\M_i(d))$ with probability $(1-\gamma)$. As above, the probability is taken over both the internal randomness of the metaselection algorithm as well as the randomness in $\M_i$'s.  The metaselection algorithm may need to make $T$ oracle calls to the $\M_i$'s, and may be $c\eps$-DP whenever each $\M_i$ is $\eps$-DP. Our goal as before is to understand the trade-off between the computational overhead $T$ and the privacy overhead $c$. Note that hyperparameter tuning can be phrased as metaselection, by treating the private training algorithm for each setting of hyperaprameters as a separate algorithm amongst $K$ options. More generally, we can phrase metaselection and hyperparameter tuning as special cases of repetition, where the target value is the $(1-\frac{1}{2K})$\textit{th} quantile of the distribution, instead of the median. This variant can handle the case when the set of hyperparameter settings may be large (or even unbounded) but we expect a non-trivial measure of the hyperparameter settings to be good.

\begin{proposition}
Let $\beta > 0$ and $\mathcal{A}$ be an oracle algorithm that satisfies the following properties:
\begin{description}
    \item{Privacy:} For any $\eps$-DP mechanism $\M : D^n \rightarrow (R \times \Re)$, $\mathcal{A}^{\M}$ is $(c\eps,\delta)$-DP.
    \item{Utility:} For any input $d$, let $m$ be such that $\Pr[\val(\M(d)) \geq m] \geq \beta$. Then $\Pr[\val(\mathcal{A}^\M(d)) \geq m] \geq 1-\gamma$.
    \item{Runtime:} $\mathcal{A}$ on input $d$ makes at most $T = T(c, \beta)$ calls to $\M$ (on any inputs) in expectation.
\end{description}
Let $H$ be a hyperparameter space, and let $\mu_H$ be a measure on $H$. Then there is an oracle algorithm $\tilde{\mathcal{A}}$ that satisfies the following properties:
\begin{description}
    \item{Privacy:} For any $\eps$-DP mechanism $\M : D^n \times H \rightarrow (R \times \Re)$, $\tilde{\mathcal{A}}^{\M}$ is $(c\eps,\delta)$-DP.
    \item{Utility:} For any input $d$, let $m$ be such that $\Pr_{h \sim \mu_H}[\val(\M(d)) \geq m] \geq \beta$. Then $\Pr[\val(\tilde{\mathcal{A}}^\M(d)) \geq m] \geq 1-\gamma$.
    \item{Runtime:} $\tilde{\mathcal{A}}$ on input $d$ makes at most $T = T(c, \beta)$ calls to $\M$ (on any inputs) in expectation.
\end{description}
In particular, when $H=[K]$ and $\mu_H$ is uniform, then $\tilde{\mathcal{A}}$ competes with the largest median value $\max_{k\in[K]} Median(\val(\M(d,k)))$ and makes $T=T(c, \frac{1}{2K})$ calls to $\M$.
\end{proposition}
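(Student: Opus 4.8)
The plan is to reduce the hyperparameter-tuning task to the plain repetition task by folding the sampling of a hyperparameter into the mechanism. Given $\M : D^n \times H \rightarrow (R \times \Re)$ and the measure $\mu_H$, I would define a mechanism $\M' : D^n \rightarrow (R \times \Re)$ that, on input $d$, draws $h \sim \mu_H$ independently of $d$ (and independently of any prior calls) and outputs $\M(d,h)$. The first step is to verify that $\M'$ inherits $\eps$-DP from $\M$: for neighboring $d,d'$ and measurable $S$,
\[
\Pr[\M'(d) \in S] = \int_H \Pr[\M(d,h) \in S]\, d\mu_H(h) \leq \int_H e^{\eps}\,\Pr[\M(d',h) \in S]\, d\mu_H(h) = e^{\eps}\,\Pr[\M'(d') \in S],
\]
using that $\M(\cdot,h)$ is $\eps$-DP for each fixed $h$ and that $\mu_H$ does not depend on the data. (This is pure DP in, pure DP out, so it meets whatever flavor of privacy precondition $\mathcal{A}$ requires.) Thus $\M'$ is a legitimate $\eps$-DP oracle for $\mathcal{A}$.

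Next I would define $\tilde{\mathcal{A}}^{\M} := \mathcal{A}^{\M'}$ in the obvious way: whenever $\mathcal{A}$ queries its oracle on a dataset $d''$, $\tilde{\mathcal{A}}$ samples a fresh $h \sim \mu_H$, calls $\M(d'',h)$, and hands the resulting tuple back to $\mathcal{A}$; $\tilde{\mathcal{A}}$ outputs whatever $\mathcal{A}$ outputs. Privacy and runtime are then immediate: since $\M'$ is $\eps$-DP and $\mathcal{A}$ converts any $\eps$-DP oracle into a $(c\eps,\delta)$-DP algorithm, $\tilde{\mathcal{A}}^{\M} = \mathcal{A}^{\M'}$ is $(c\eps,\delta)$-DP; and each oracle call of $\mathcal{A}$ to $\M'$ is simulated by exactly one call to $\M$, so $\tilde{\mathcal{A}}$ makes at most $T(c,\beta)$ calls to $\M$ in expectation. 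The output-validity restriction is preserved too: by the standing assumption, $\mathcal{A}^{\M'}$ outputs one of the tuples returned by a run of $\M'$, and every such tuple is exactly a tuple returned by a run of $\M$ on some $(d'',h)$.

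For utility, observe that $\val(\M'(d)) = \val(\M(d,h))$ with $h \sim \mu_H$ and the internal randomness of $\M$, so the event $\{\val(\M'(d)) \geq m\}$ has probability $\Pr_{h \sim \mu_H}\Pr_{\M}[\val(\M(d,h)) \geq m]$, which is precisely the quantity assumed $\geq \beta$ in the utility hypothesis for $\tilde{\mathcal{A}}$. Applying $\mathcal{A}$'s utility guarantee to the oracle $\M'$ with threshold $m$ gives $\Pr[\val(\tilde{\mathcal{A}}^{\M}(d)) \geq m] = \Pr[\val(\mathcal{A}^{\M'}(d)) \geq m] \geq 1-\gamma$, as required. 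For the ``in particular'' claim, take $H = [K]$ with $\mu_H$ uniform, let $k^\star \in \arg\max_{k\in[K]} Median(\val(\M(d,k)))$ and set $m = Median(\val(\M(d,k^\star)))$; then $\Pr_{\M}[\val(\M(d,k^\star)) \geq m] \geq \tfrac12$, so averaging over the uniform choice of $k$ yields $\Pr_{k\sim[K]}\Pr_{\M}[\val(\M(d,k)) \geq m] \geq \tfrac{1}{2K}$, i.e. the hypothesis holds with $\beta = \tfrac{1}{2K}$, and $\tilde{\mathcal{A}}$ then competes with $\max_{k\in[K]} Median(\val(\M(d,k)))$ using $T(c,\tfrac{1}{2K})$ calls.

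This is a black-box reduction, so there is no genuinely hard step; the only points needing care are the privacy bookkeeping for $\M'$ (confirming that $\mathcal{A}$'s precondition is met by $\M'$ in the exact sense required, including that $\mu_H$-averaging preserves pure DP and, if relevant, Renyi DP via joint convexity of the divergence), and the routine measure-theoretic check that the maps $h \mapsto \Pr[\M(d,h)\in S]$ are $\mu_H$-measurable when $H$ is a general measure space. Neither constitutes a real obstacle.
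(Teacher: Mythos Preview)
Your proposal is correct and follows essentially the same approach as the paper: define $\M'(d)$ to sample $h\sim\mu_H$ and return $\M(d,h)$, then set $\tilde{\mathcal{A}}^{\M}:=\mathcal{A}^{\M'}$. The paper's proof is a two-sentence sketch of exactly this reduction; you have simply filled in the privacy, runtime, utility, and $H=[K]$ details that the paper leaves implicit.
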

\begin{proof}
    Define $\M'$ that on input $d$, samples a random $h \sim \mu_H$ and runs $\M(d,h)$. Running $\mathcal{A}$ with $\M'$ gives the first result. The second result uses the fact that the output of $\M'$ when $H=[K]$ and $\mu$ is uniform is at least the largest median with probability  at least $\frac 1 {2K}$.
\end{proof}

Differential privacy constrains the likelihood of events on neighboring datasets. This also implies constraints for datasets at some bounded distance. The following forms of these constraints will be useful. The proofs of the following consequences of group privacy~\citep{DworkR14} are elementary.

\begin{proposition}
\label{prop:puredp_probs}
    Let $\M$ satisfy $\eps$-DP. Then for datasets $d, d'$ such that $|d-d'|_H \leq \Delta$, and any event $E$,
    \begin{align*}
        \Pr[\M(d) \in E] &\leq e^{\eps\Delta}\Pr[\M(d') \in E].
    \end{align*}
    In particular, if $\Pr[\M(d) \in E] \geq \frac 1 4$, then $\Pr[\M(d') \in E] \geq e^{-\eps\Delta}/4$.
\end{proposition}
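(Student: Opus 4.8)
The plan is to prove the first inequality by iterating the neighboring-datasets definition of $\eps$-DP along a path connecting $d$ to $d'$, and then to read off the ``in particular'' statement as an immediate corollary. This is the standard group-privacy argument, specialized to pure DP where the additive $\delta$ term vanishes and the chaining is clean.

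First I would observe that since $|d-d'|_H \leq \Delta$, there is a sequence of datasets $d = d_0, d_1, \ldots, d_k = d'$ with $k = |d-d'|_H \leq \Delta$, where each consecutive pair $d_i, d_{i+1}$ is obtained by changing a single coordinate and is therefore neighboring: concretely, enumerate the coordinates on which $d$ and $d'$ disagree and switch them to their $d'$-values one at a time. Applying the definition of $\eps$-DP (with $\delta = 0$) to each neighboring pair gives $\Pr[\M(d_i) \in E] \leq e^{\eps}\Pr[\M(d_{i+1}) \in E]$ for $0 \le i < k$, and chaining these $k$ inequalities yields $\Pr[\M(d) \in E] \leq e^{\eps k}\Pr[\M(d') \in E]$. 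Since $k \leq \Delta$ and $\Pr[\M(d') \in E] \geq 0$, we get $e^{\eps k}\Pr[\M(d') \in E] \leq e^{\eps\Delta}\Pr[\M(d') \in E]$, which is the claimed bound.

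For the second statement, suppose $\Pr[\M(d) \in E] \geq \frac{1}{4}$. Then $\frac{1}{4} \leq \Pr[\M(d) \in E] \leq e^{\eps\Delta}\Pr[\M(d') \in E]$, and dividing through by $e^{\eps\Delta}$ gives $\Pr[\M(d') \in E] \geq e^{-\eps\Delta}/4$.

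The only mild subtlety worth flagging is that the hypothesis is $|d-d'|_H \leq \Delta$ rather than $|d-d'|_H = \Delta$; this is handled by taking the path to have length exactly $|d-d'|_H$ and then using monotonicity of $t \mapsto e^{\eps t}$ to pass to the bound with $\Delta$ in the exponent. Beyond that there is no real obstacle — the argument is entirely elementary, as the paper itself notes.
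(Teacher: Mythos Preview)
Your proof is correct and is precisely the standard group-privacy chaining argument the paper has in mind; the paper itself does not spell out a proof, merely noting that the statement is an elementary consequence of group privacy~\citep{DworkR14}. There is nothing to add.
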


\begin{proposition}
\label{prop:ed_dp_probs}
    Let $\M$ satisfy $(\eps, \delta)$-DP. Then for data sets $d,d'$ such that $|d-d'|_H = \Delta$, and any event $E$,
    \begin{align*}
        \Pr[\M(d)\in E] &\leq e^{\eps\Delta}(\Pr[\M(d') \in E] + \Delta\delta).
    \end{align*}
    In particular, if $\delta < e^{-\eps\Delta}/8\Delta$ and $\Pr[\M(d) \in E] \geq \frac 1 4$, then $\Pr[\M(d') \in E] \geq e^{-\eps\Delta}/8$.
\end{proposition}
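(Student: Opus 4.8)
The plan is to establish the multiplicative--additive bound by induction on the Hamming distance $\Delta$, using the fact that $d$ and $d'$ can be joined by a path $d = d_0, d_1, \ldots, d_\Delta = d'$ of datasets in which consecutive terms are neighboring, and then to read off the ``in particular'' claim by a one-line rearrangement. This is the standard group-privacy argument for $(\eps,\delta)$-DP, with the constants tracked explicitly.

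For the base case $\Delta = 1$, the bound $\Pr[\M(d)\in E] \le e^\eps \Pr[\M(d')\in E] + \delta$ is exactly the definition of $(\eps,\delta)$-DP, and since $e^\eps \ge 1$ this is at most $e^\eps(\Pr[\M(d')\in E] + \delta)$, which is the claimed inequality with $\Delta = 1$. For the inductive step, fix $d, d'$ with $|d-d'|_H = \Delta$ and pick an intermediate dataset $d''$ with $|d - d''|_H = 1$ and $|d'' - d'|_H = \Delta-1$ (flip one coordinate of $d$ on which it disagrees with $d'$). Applying $(\eps,\delta)$-DP across the edge $(d, d'')$ and the inductive hypothesis across $(d'', d')$ gives
\[
\Pr[\M(d)\in E] \;\le\; e^\eps\Pr[\M(d'')\in E] + \delta \;\le\; e^{\eps}\bigl(e^{(\Delta-1)\eps}(\Pr[\M(d')\in E] + (\Delta-1)\delta)\bigr) + \delta.
\]
Expanding the right-hand side yields $e^{\Delta\eps}\Pr[\M(d')\in E] + (\Delta-1)e^{\Delta\eps}\delta + \delta$, and since $\delta \le e^{\Delta\eps}\delta$ this is at most $e^{\Delta\eps}(\Pr[\M(d')\in E] + \Delta\delta)$, completing the induction.

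For the second statement, suppose $\Pr[\M(d)\in E] \ge \tfrac14$. The bound just proved gives $\tfrac14 \le e^{\eps\Delta}(\Pr[\M(d')\in E] + \Delta\delta)$, i.e. $\Pr[\M(d')\in E] \ge e^{-\eps\Delta}/4 - \Delta\delta$. Under the hypothesis $\delta < e^{-\eps\Delta}/(8\Delta)$ we have $\Delta\delta < e^{-\eps\Delta}/8$, hence $\Pr[\M(d')\in E] > e^{-\eps\Delta}/4 - e^{-\eps\Delta}/8 = e^{-\eps\Delta}/8$, as claimed.

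There is essentially no obstacle here: the only point requiring a little care is the bookkeeping of the additive terms as they accumulate along the path. The exact telescoped sum is $\delta(1 + e^\eps + \cdots + e^{(\Delta-1)\eps})$, and the step ``$\delta \le e^{\Delta\eps}\delta$'' above is just the crude observation that this sum is dominated by $\Delta e^{\Delta\eps}\delta$, which is all the statement needs. One could alternatively bypass the induction and sum the geometric series directly; I would keep the induction since it makes the constant $\Delta$ in front of $\delta$ transparent.
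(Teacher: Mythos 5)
Your proof is correct, and it is exactly the standard group-privacy argument that the paper invokes without writing out (it cites these facts as ``elementary consequences of group privacy''). Both the induction along a path of neighboring datasets and the final rearrangement for the ``in particular'' clause check out, so there is nothing to add.
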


\noindent A similar statement holds for Renyi differential privacy. 
\begin{proposition}
\label{prop:rdp_probs}
    Let $\M$ satisfy $(\alpha, \eps)$-Renyi DP for $\alpha>1$. Then for data sets $d,d'$ such that $|d-d'|_H = \Delta$, and any event $E$,
    \begin{align*}
        \Pr[\M(d)\in E] &\leq e^{\eps\sum_{i=1}^{\Delta}(1 - \frac 1 \alpha)^i} \cdot (\Pr[\M(d') \in E])^{(1-\frac{1}{\alpha})^{\Delta}}.
    \end{align*}
In particular, for $\alpha \geq \Delta+1$, we get
    \begin{align*}
        \Pr[\M(d)\in E] &\leq e^{(\alpha-1)\eps} \cdot (\Pr[\M(d') \in E])^{1/e}.
    \end{align*}
    It follows that for $\alpha \geq \Delta+1$, if $\Pr[\M(d)\in E] \geq \frac 1 4$, then $\Pr[\M(d') \in E] \geq e^{-e(\alpha-1)\eps}/44$.
\end{proposition}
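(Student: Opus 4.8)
\emph{Proof proposal.} The plan is to reduce to a one-step ``probability-preservation'' inequality for Renyi divergence and then iterate it along a path of neighboring datasets. First I would establish the following: if $D_\alpha(P \| Q) \le \eps$ with $\alpha > 1$, then for every event $E$ we have $P(E) \le e^{(1-\frac1\alpha)\eps}\, Q(E)^{1-\frac1\alpha}$. To prove this, apply the data-processing inequality for Renyi divergence to the indicator of $E$, so that $D_\alpha(P\|Q) \ge D_\alpha(\mathrm{Ber}(P(E)) \,\|\, \mathrm{Ber}(Q(E))) = \tfrac1{\alpha-1}\log\bigl(P(E)^\alpha Q(E)^{1-\alpha} + P(E^c)^\alpha Q(E^c)^{1-\alpha}\bigr)$; dropping the nonnegative second summand and rearranging the resulting bound $\alpha \log P(E) - (\alpha-1)\log Q(E) \le (\alpha-1)\eps$ gives the claim.

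Next I would take a path $d = d_0, d_1, \dots, d_\Delta = d'$ of pairwise neighbors (which exists since $|d-d'|_H = \Delta$) and write $p_i = \Pr[\M(d_i) \in E]$ and $\beta = 1 - \tfrac1\alpha \in (0,1)$. Since $\M$ is $(\alpha,\eps)$-RDP on neighbors, the one-step bound applied to each edge gives $p_{i-1} \le e^{\beta\eps}\, p_i^{\beta}$ for all $i$. Unrolling this recursion --- each successive substitution raises the accumulated multiplicative factor to the power $\beta$ --- yields $p_0 \le e^{\beta\eps \sum_{j=0}^{\Delta-1}\beta^j}\, p_\Delta^{\beta^\Delta} = e^{\eps \sum_{i=1}^{\Delta}\beta^i}\, p_\Delta^{\beta^\Delta}$, which is precisely the first displayed inequality.

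For the specialization to $\alpha \ge \Delta+1$: here $\beta \ge \tfrac{\Delta}{\Delta+1}$, so $\beta^\Delta \ge (1+\tfrac1\Delta)^{-\Delta} > e^{-1}$, while $\sum_{i=1}^{\Delta}\beta^i \le \tfrac{\beta}{1-\beta} = \alpha-1$. Since $p_\Delta \le 1$ and $x \mapsto p_\Delta^{\,x}$ is nonincreasing, $p_\Delta^{\beta^\Delta} \le p_\Delta^{1/e}$, and therefore $p_0 \le e^{(\alpha-1)\eps}\, p_\Delta^{1/e}$. Finally, if $\Pr[\M(d)\in E] \ge \tfrac14$, this reads $\tfrac14 \le e^{(\alpha-1)\eps}\Pr[\M(d')\in E]^{1/e}$; isolating $\Pr[\M(d')\in E]^{1/e} \ge \tfrac14 e^{-(\alpha-1)\eps}$ and raising both sides to the $e$-th power gives $\Pr[\M(d')\in E] \ge \bigl(\tfrac14 e^{-(\alpha-1)\eps}\bigr)^{e} = 4^{-e} e^{-e(\alpha-1)\eps}$, and $4^{e} < 44$ yields the stated bound $e^{-e(\alpha-1)\eps}/44$.

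Everything here is elementary and I do not expect a genuine obstacle. The two points that warrant care are (i) deriving the one-step bound with the sharp exponent $1-\tfrac1\alpha$ --- this is the standard probability-preservation property of Renyi divergence and comes from applying data processing to the binary partition $\{E, E^c\}$ --- and (ii) keeping the inequalities pointed the right way when passing from an upper bound on $\Pr[\M(d) \in E]$ to a lower bound on $\Pr[\M(d') \in E]$: the exponent $\beta^\Delta < 1$ forces us to raise to the reciprocal power $1/\beta^\Delta$, and the hypothesis $\alpha \ge \Delta+1$ is exactly what keeps $1/\beta^\Delta$ below $e$ so that the final constants come out clean.
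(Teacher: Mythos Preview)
Your proposal is correct and follows essentially the same route as the paper: iterate the one-step probability-preservation inequality $P(E)\le e^{(1-\frac1\alpha)\eps}Q(E)^{1-\frac1\alpha}$ along a chain of $\Delta$ neighbors, then bound the geometric sum by $\alpha-1$ and the exponent $(1-\tfrac1\alpha)^\Delta$ below by $1/e$ when $\alpha\ge\Delta+1$, and finally rearrange using $4^e<44$. The only cosmetic difference is that you rederive the one-step bound via the data-processing inequality for Renyi divergence, whereas the paper simply cites it as~\citep[Prop.~10]{Mironov17}.
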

\begin{proof}
Let $d'=d_0,d_1,\ldots,d_{\Delta}=d$ be a sequence of datasets where $d_i$ and $d_{i+1}$ are adjacent. A consequence of $(\alpha,\eps)$-RDP ~\citep[Prop. 10]{Mironov17} is that for any event $E$,
\begin{align*}
    \Pr[\M(d_{i+1}) \in E] &\leq e^{\eps(1-\frac 1 \alpha)} \cdot (\Pr[\M(d_{i}) \in E])^{(1-\frac 1 \alpha)}.
\end{align*}
This implies the base case ($k=1$) of the claim that for all $k$,
\begin{align*}
    \Pr[\M(d_k) \in E] &\leq e^{\eps\sum_{i=1}^{k}(1 - \frac 1 \alpha)^i} \cdot (\Pr[\M(d_0) \in E])^{(1-\frac{1}{\alpha})^{k}}.
\end{align*}
Suppose that the claim holds for $d_{k-1}$. We now inductively prove it for $d_k$. We write
\begin{align*}
    \Pr[\M(d_{k}) \in E] &\leq e^{\eps(1-\frac 1 \alpha)} \cdot (\Pr[\M(d_{k-1}) \in E])^{(1-\frac 1 \alpha)}\\
    &\leq e^{\eps(1-\frac 1 \alpha)} \cdot \left(e^{\eps\sum_{i=1}^{k-1}(1 - \frac 1 \alpha)^i} \cdot (\Pr[\M(d_0) \in E])^{(1-\frac{1}{\alpha})^{k-1}}\right)^{(1-\frac 1 \alpha)}\\
    &= e^{\eps\sum_{i=1}^{k}(1 - \frac 1 \alpha)^i} \cdot (\Pr[\M(d_0) \in E])^{(1-\frac{1}{\alpha})^{k}}.
\end{align*}
This completes the proof of the first part. For the second part, we use the following two facts. Firstly, for $\alpha \geq \Delta+1$, $(1-\frac 1 \alpha)^\Delta > \frac 1 e$, so that the exponent of $(\Pr[\M(d_0) \in E])$ is at least $(1/e)$ (when $k=\Delta$). Secondly, the geometric series $\sum_{i=1}^{\Delta}(1 - \frac 1 \alpha)^i \leq \sum_{i=1}^{\infty}(1 - \frac 1 \alpha)^i = \alpha -1$. The third part follows immediately by rearrangement.
\end{proof}
\section{Main Lower Bound}

\begin{theorem}
\label{thm:main_lb}
Let $\mathcal{A}$ be an oracle algorithm that satisfies the following properties:
\begin{description}
    \item{Privacy:} For any $\eps$-DP mechanism $\M : D^n \rightarrow (R \times \Re)$, $\mathcal{A}^{\M}$ is $(c\eps,\delta)$-DP.
    \item{Utility:} For any input $d$, $\Pr[\val(\mathcal{A}^\M(d)) \geq Median(\val(\M(d)))] \geq 1-\gamma$.
    \item{Runtime:} $\mathcal{A}$ on input $d$ makes at most $T$ calls to $\M$ on input $d$ in expectation for some $T \geq e^{\eps}$.
\end{description}
Then for $\delta < \frac {\eps\gamma} {4\ln 4T}$, $T \geq (8\gamma)^{-\frac{1}{4c}} / 4$, or equivalently, $c \geq \frac{\ln \frac 1 {8\gamma}}{2\ln 4T}$.
\end{theorem}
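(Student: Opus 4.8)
The plan is to pit $\mathcal A$'s three guarantees against each other on a carefully chosen hard instance, using group privacy to move a ``rare event'' from a dataset where Utility forces success to a far dataset where the Runtime bound forces failure.

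\emph{Step 1 (the hard instance).} I would construct a distribution over $\eps$-DP mechanisms $\M$ on $D^n$ for large $n$. A typical draw has a \emph{hidden} dataset $d^\star$ (say uniform over an exponentially large Hamming-separated code) with $Median(\M(d^\star))=m$ for a large, \emph{public} value $m$, while $Median(\M(d))$ decays as $d$ moves away from $d^\star$; the law of $\val(\M(d))$ should have the steepest upper tail consistent with $\eps$-DP, so that one oracle call on any dataset $d$ returns a value $\ge m$ only with probability $\tfrac12 e^{-\eps|d-d^\star|_H}$. I would specify such an $\M$ on a convenient sub-domain and extend it to all of $D^n$ at a constant-factor cost in $\eps$ via the Extension Lemma (\cref{prop:extension}). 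Since $\mathcal A$ is an oracle algorithm it cannot read $d^\star$ off $\M$, and since the code is exponentially large it cannot locate $d^\star$ with few calls; so against this $\M$ the only routes to a value $\ge m$ are (i) repeatedly querying near the input and (ii) blind luck. Making $\M$ genuinely $\eps$-DP while ensuring these are the \emph{only} routes — in particular so the ``good region'' is both hard to find and costly (in privacy) to exploit by brute repetition — is the heart of the proof and the step I expect to be hardest.

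\emph{Step 2 (transferring the event).} Fix $\M$ from the support of the construction with hidden dataset $d^\star$, and pick $d^\circ$ at Hamming distance $\Delta$ from $d^\star$ (with $\Delta$ chosen at the end). By Utility, $\Pr[\val(\mathcal A^{\M}(d^\star))\ge m]\ge 1-\gamma$. On input $d^\circ$, every tuple $\mathcal A$ may output was returned by one of at most $T$ expected oracle calls, each yielding a value $\ge m$ with probability at most $\tfrac12 e^{-\eps\Delta}$ (after arguing, for $n$ large, that the search route contributes negligibly); Wald's identity gives $\Pr[\val(\mathcal A^{\M}(d^\circ))\ge m]\le \tfrac T2 e^{-\eps\Delta}$. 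Since $\mathcal A^{\M}$ is $(c\eps,\delta)$-DP and $|d^\star-d^\circ|_H=\Delta$, \cref{prop:ed_dp_probs} gives
\[
1-\gamma \;\le\; \Pr[\val(\mathcal A^{\M}(d^\star))\ge m] \;\le\; e^{c\eps\Delta}\Bigl(\tfrac T2 e^{-\eps\Delta}+\Delta\delta\Bigr).
\]

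\emph{Step 3 (optimize $\Delta$ and conclude).} I would take $\Delta$ of order $\tfrac1\eps\ln\tfrac1\gamma$ (up to constants and a $\ln T$ correction; the hypothesis $T\ge e^{\eps}$ guarantees $\Delta\ge 1$). Then the hypothesis $\delta<\tfrac{\eps\gamma}{4\ln 4T}$ is precisely what forces $\Delta\delta$ to be negligible next to $\tfrac T2 e^{-\eps\Delta}$, so the displayed inequality collapses to one of the form $1-\gamma\le e^{c\eps\Delta}\cdot Te^{-\eps\Delta}$; substituting the chosen $\Delta$ and using $1-\gamma\ge\tfrac12$ yields $(4T)^{2c}\ge \tfrac1{8\gamma}$, i.e.\ $c\ge \tfrac{\ln(1/8\gamma)}{2\ln 4T}$, equivalently $T\ge(8\gamma)^{-1/4c}/4$. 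Tracking the exact constants through the choice of $\Delta$ and the $\delta$-bound is routine; all the real content is in the construction of Step 1 and in justifying the far-dataset bound of Step 2.
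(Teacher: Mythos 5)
Your construction and overall setup are essentially the paper's (a ``good'' dataset where the median is high, a $e^{-\eps\Delta}$-rare good outcome at distance $\Delta$ from it, and a group-privacy transfer across that distance); note that for this particular theorem the code-based hiding you worry about in Step~1 is unnecessary, since the Runtime hypothesis already restricts $\mathcal{A}$ to calling $\M$ on its own input, so a plain two-point instance $d_0,d_1$ at Hamming distance $\Delta$ suffices (the hiding is exactly what \cref{thm:main_lb_coded} adds). The genuine gap is in Steps~2--3: you apply group privacy in the unprofitable direction, and the resulting inequality cannot yield the theorem. Your display reads
\[
1-\gamma \le e^{c\eps\Delta}\Bigl(\tfrac{T}{2}e^{-\eps\Delta}+\Delta\delta\Bigr),
\]
and after dropping the $\delta$ term and replacing $1-\gamma$ by $\tfrac12$ it becomes $\tfrac12\le T e^{(c-1)\eps\Delta}$, which holds vacuously for every $\Delta$ whenever $c\ge 1$ and $T\ge 1$; no lower bound on $c$ or $T$ follows. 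In particular the claim that ``substituting the chosen $\Delta$ and using $1-\gamma\ge\tfrac12$ yields $(4T)^{2c}\ge 1/(8\gamma)$'' cannot be right: once $1-\gamma$ is bounded below by $\tfrac12$, the quantity $\gamma$ no longer appears anywhere in the inequality. Choosing $\Delta\sim\tfrac1\eps\ln\tfrac1\gamma$ does not rescue it; the most this direction can give is $c\gtrsim\ln(1/\gamma)/(\ln T+\ln(1/\gamma))$, which is $O(1)$ when $T$ is polylogarithmic in $1/\gamma$ and hence far short of the theorem.

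The fix is to transfer the \emph{failure} event from the far dataset to the good one, rather than the success event the other way. Choose $\Delta=\lceil\ln(4T)/\eps\rceil$ so that $Te^{-\eps\Delta}\le\tfrac14$; then on input $d^\circ$ the algorithm never sees a high-valued sample, and hence outputs a low value, with probability at least $\tfrac14$ (your Wald bound, or the paper's Markov-plus-union-bound, both give this, using the assumption that $\mathcal{A}$ only outputs tuples returned by its oracle calls). Group privacy applied to the failure event then gives $\tfrac14\le\Pr[\mathrm{fail\ at\ }d^\circ]\le e^{2c\eps\Delta}\bigl(\Pr[\mathrm{fail\ at\ }d^\star]+\Delta\delta\bigr)$, and Utility caps $\Pr[\mathrm{fail\ at\ }d^\star]$ by $\gamma$; together with the hypothesis on $\delta$ this yields $\gamma\ge\tfrac18(4T)^{-4c}$, which rearranges to the stated bounds. (The factor $2$ in the exponent arises because the Extension Lemma only guarantees the constructed mechanism is $2\eps$-DP, so $\mathcal{A}^{\M}$ is $(2c\eps,\delta)$-DP; your sketch does not track this, but it only affects constants.)
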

\begin{proof}
With some foresight, we set $\Delta = \lceil \frac{\ln 4T}{\eps}\rceil$, and set $q = e^{-\eps\Delta} \leq \frac 1 {4T}$. Consider datasets in  $\{0,1\}^\Delta$ where $d_0 = 0^\Delta$, and $d_1 = 1^\Delta$. Now we define a mechanism $\M$ such that
\begin{align*}
    \M(d_1) = \left\{ \begin{array}{ll} (r, 1) & \mbox{ w.p. } 1- q\\ (r', 0) & \mbox{ w.p. } q\end{array}\right. &\;\;\mbox{ and }\;\; \M(d_0) = \left\{\begin{array}{ll} (r, 1) & \mbox{ w.p. } q\\ (r', 0) & \mbox{ w.p. }1-q\end{array}\right. .
\end{align*}
Note that this specification on $\{d_0, d_1\}$ satisfies $\eps$-DP as $\frac{1-q}{q} \leq \frac 1 q  = e^{\eps |d_0 - d_1|_H}$. By the extension lemma (\cref{prop:extension}), this partial mechanism can be extended to a $2\eps$-DP mechanism on $\{0,1\}^\Delta$.

Consider a run of $\mathcal{A}^\M(d_0)$.  Let $E$ be the event that $\mathcal{A}^\M$ on input $d_0$ makes at most $2T$ calls to $\M(d_0)$. By Markov's inequality,  $\Pr[E] \geq \frac 1 2$.
Conditioned on this event $E$, the probability that $(r, 1)$ is the  output of \emph{any} of the at most $2T$ runs of $\M(d_0)$ is at most $2qT \leq \frac 1 2$. Since $\mathcal{A}^\M$ has never seen a $(r,1)$ output in this case, it follows that $Pr[\mathcal{A}^{\M}(d_0) = (r, 1) \mid E] \leq \frac 1 2$, so that $Pr[\mathcal{A}^{\M}(d_0) = (r', 0) \mid E] \geq \frac 1 2$. It follows that $Pr[\mathcal{A}^{\M}(d_0) = (r', 0)] \geq \frac 1 4$.

As $\M$ is $2\eps$-DP, the privacy property of $\mathcal{A}^{\M}$ implies that it is $(2c\eps, \delta)$-DP. By \cref{prop:ed_dp_probs}, it follows that for $\delta < e^{-2c\eps\Delta}/8\Delta$,
\begin{align*}
    \Pr[\mathcal{A}^\M(d_1) = (r',0)] &\geq e^{-2c\eps\Delta} / 8.
\end{align*}
By the utility property, the left hand side is at most $\gamma$. Since $e^{2c\eps\Delta} \leq e^{4c\ln 4T}$, it follows that
  $\gamma  \geq \frac{1}{8(4T)^{4c}}$.
Rearranging, we get the claimed result.
\end{proof}

\noindent By using \cref{prop:rdp_probs} in lieu of \cref{prop:ed_dp_probs}, we get a similar result for Renyi DP. We omit the straightforward proof, and similar corollaries of \cref{thm:main_lb_coded} and \cref{thm:main_lb_hyperparameter}.
\begin{corollary}
    In ~\cref{thm:main_lb}, if we replace the Privacy condition by
    \begin{description}
    \item{Privacy (RDP):} For any $\eps$-DP mechanism $\M : D^n \rightarrow (R \times \Re)$, $\mathcal{A}^{\M}$ is $(\alpha,c\eps)$-RDP.
    \end{description}
    Then if $\alpha \leq 1+ \lceil\frac {\ln T} {\eps}\rceil$, $T \geq (44\gamma)^{-\frac{1}{2ec}} / 4$, or equivalently, $c \geq \frac{\ln \frac 1 {44\gamma}}{2e\ln 4T}$.
\end{corollary}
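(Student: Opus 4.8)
The plan is to reprise the proof of \cref{thm:main_lb} almost line for line, swapping its single appeal to the $(\eps,\delta)$ group-privacy estimate \cref{prop:ed_dp_probs} for the Renyi analogue \cref{prop:rdp_probs}. I keep the same gadget: set $\Delta = \lceil \tfrac{\ln 4T}{\eps}\rceil$ and $q = e^{-\eps\Delta} \le \tfrac{1}{4T}$, work over $\{0,1\}^\Delta$ with $d_0 = 0^\Delta$ and $d_1 = 1^\Delta$, and let $\M(d_1)$ output $(r,1)$ with probability $1-q$ and $(r',0)$ with probability $q$, with $\M(d_0)$ the mirror image. This specification is $\eps$-DP on $\{d_0,d_1\}$, so by the Extension Lemma (\cref{prop:extension}) it extends to a $2\eps$-DP mechanism on the whole domain.

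The combinatorial core carries over unchanged, as it never invokes the privacy notion: Markov's inequality bounds at $2T$ the number of calls $\mathcal{A}^\M(d_0)$ makes to $\M(d_0)$, with probability at least $\tfrac12$; conditioned on that, the probability that any of these calls returns $(r,1)$ is at most $2qT \le \tfrac12$; and since $\mathcal{A}$ may only output a tuple it has actually received, this yields $\Pr[\mathcal{A}^\M(d_0) = (r',0)] \ge \tfrac14$.

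For the privacy step, $\M$ is $2\eps$-DP, so the Renyi privacy hypothesis makes $\mathcal{A}^\M$ satisfy $(\alpha, 2c\eps)$-RDP. I then invoke the ``in particular'' clause of \cref{prop:rdp_probs} with the event $E = \{(r',0)\}$, the pair $d_0,d_1$ at Hamming distance $\Delta$, and Renyi parameter $2c\eps$; this clause applies precisely in the regime $\alpha \ge \Delta+1$ that the hypothesis on $\alpha$ is meant to isolate (there $(1-\tfrac1\alpha)^\Delta \ge \tfrac1e$ and $\sum_{i\le\Delta}(1-\tfrac1\alpha)^i \le \alpha-1$), and it gives $\Pr[\mathcal{A}^\M(d_1) = (r',0)] \ge e^{-2ec\eps(\alpha-1)}/44$. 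On the other hand, since $q < \tfrac12$ the median of $\val(\M(d_1))$ is $1$, and $\mathcal{A}$ can only output a value it has seen from $\M(d_1)$, namely $0$ or $1$; so the utility guarantee forces $\Pr[\mathcal{A}^\M(d_1) = (r',0)] \le \gamma$. Combining, $\gamma \ge e^{-2ec\eps(\alpha-1)}/44$, and feeding in the bound $\eps(\alpha-1) = O(\ln T)$ coming from the constraint on $\alpha$ (together with $\eps \le \ln T$, i.e. the runtime hypothesis $T \ge e^\eps$) and rearranging produces $T \ge (44\gamma)^{-1/(2ec)}/4$, equivalently $c \ge \tfrac{\ln(1/44\gamma)}{2e\ln 4T}$.

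The one thing that needs care is the bookkeeping of constants and of the admissible range of $\alpha$: one has to line up the requirement $\alpha \ge \Delta+1$ with $\Delta = \lceil \tfrac{\ln 4T}{\eps}\rceil$ against the statement's hypothesis (the two agree up to the constant inside the logarithm), absorb the factor $2$ from \cref{prop:extension} and the $4^e < 44$ slack, and handle the ceilings in $\Delta$ and $\alpha$ so that the exponent lands on $2ec$ rather than something larger. No new idea is involved beyond the three moves of the original proof --- extension lemma, Markov-plus-union-bound, and group privacy together with utility --- with the Renyi group-privacy bound plugged in for the $(\eps,\delta)$ one, which is why the details can be omitted.
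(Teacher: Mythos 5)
Your overall plan is the paper's intended one --- the paper itself dismisses the proof as ``use \cref{prop:rdp_probs} in lieu of \cref{prop:ed_dp_probs}'' --- and your combinatorial steps (the gadget, the extension lemma, Markov plus union bound giving $\Pr[\mathcal{A}^\M(d_0)=(r',0)]\ge\tfrac14$, and the utility step forcing $\Pr[\mathcal{A}^\M(d_1)=(r',0)]\le\gamma$) are all correct and identical to the proof of \cref{thm:main_lb}.

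The gap is in the one step you wave off as bookkeeping. The ``in particular'' clause of \cref{prop:rdp_probs} requires $\alpha\ge\Delta+1$, a \emph{lower} bound on $\alpha$; the corollary's hypothesis is $\alpha\le 1+\lceil\ln T/\eps\rceil\le 1+\lceil\ln 4T/\eps\rceil=\Delta+1$, an \emph{upper} bound. You assert that the hypothesis ``is meant to isolate'' the regime $\alpha\ge\Delta+1$ and that the two conditions ``agree up to the constant inside the logarithm,'' but they point in opposite directions: for, say, $\alpha=2$ and large $T$ the hypothesis holds while $\alpha\ge\Delta+1$ fails badly, and then only the general form of \cref{prop:rdp_probs} is available, whose exponent $(1-\tfrac1\alpha)^\Delta=2^{-\Delta}$ makes the resulting lower bound on $\Pr[\mathcal{A}^\M(d_1)=(r',0)]$ vacuous. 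Note that $\Delta\gtrsim\ln(4T)/\eps$ is forced (both by the $\eps$-DP of the partial specification and by the union bound over $2T$ calls), so you cannot shrink $\Delta$ to escape this. The argument genuinely needs \emph{both} constraints --- $\alpha\ge\Delta+1$ to apply the group-privacy clause, and $(\alpha-1)\eps\lesssim\ln 4T$ so that $e^{-2ec\eps(\alpha-1)}$ is at least $(4T)^{-O(c)}$ --- which together pin $\alpha$ to essentially the single value $\Delta+1$ (equivalently, one should prove the corollary for $\alpha\ge\Delta+1$ by passing to order $\Delta+1$ via monotonicity of the R\'enyi divergence, which only goes downward in $\alpha$). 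Your write-up supplies only the upper-bound half and treats the lower-bound half as if it followed from the hypothesis; as written, the invocation of \cref{prop:rdp_probs} is unjustified for most $\alpha$ permitted by the corollary. (This tension arguably originates in the corollary's statement itself, but your proof does not resolve it.) There is also a secondary constant issue: $\eps(\alpha-1)\le\ln T+\eps$ only gives $\le\ln 4T$ when $\eps\le\ln 4$, so landing on the exponent $2ec$ rather than $4ec$ needs an extra assumption or a tighter estimate.
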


\subsection{Allowing more general meta-algorithms}
One of the assumptions in \cref{thm:main_lb} is that when $\mathcal{A}^\M$ is run on input $d$, its oracle calls to $\M$ are also on input $d$. A more general oracle algorithm may, given input $d$, run $\M$ on additional inputs derived from $d$. We next show that the lower bound continues to hold, up to constants. At a high level, the lower bound in \cref{thm:main_lb} comes from the inability of $\mathcal{A}$ on input $d_0$ to see any sample that is unlikely in $\M(d_0)$ but likely in $\M(d_1)$. We show how to \emph{embed} such an instance in a mechanism over a slightly larger dataset, in a way where on input $\M(d_0)$, finding an output that is good for $\M(d_1)$ is still difficult.
\begin{theorem}
 \label{thm:main_lb_coded}
Let $\mathcal{A}$ be an oracle algorithm that satisfies the following properties:
\begin{description}
    \item{Privacy:} For any $\eps$-DP mechanism $\M : D^n \rightarrow (R \times \Re)$, $\mathcal{A}^{\M}$ is $(c\eps,\delta)$-DP.
    \item{Utility:} For any input $d$, $\Pr[\val(\mathcal{A}^\M(d)) \geq Median(\val(\M(d)))] \geq 1-\gamma$.
    \item{Runtime:} $\mathcal{A}$ on input $d$ makes at most $T$ calls to $\M$ (on any inputs) in expectation for some $T \geq e^{\eps}$.
\end{description}
Then for $\eps \leq 3$, $\delta < \frac {\eps\gamma} {40\ln 8T}$, $T \geq (8\gamma)^{-\frac{1}{40c}} / 8$, or equivalently, $c \geq \frac{\ln \frac 1 {8\gamma}}{40\ln 8T}$.
\end{theorem}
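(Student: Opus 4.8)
The plan is to reduce to the situation of \cref{thm:main_lb} by \emph{embedding} the hard instance into a larger dataset so that running $\M$ on inputs other than $d$ does not help $\mathcal{A}$. Recall the obstruction in \cref{thm:main_lb}: on input $d_0$, the algorithm essentially never sees the rare ``good'' output $(r,1)$, so it cannot produce it, yet privacy forces this probability to be comparable on $d_1$. The only new power here is that $\mathcal{A}$, given $d$, may query $\M$ on other datasets $d'$. To neutralize this, I would build a mechanism on a dataset space of the form $\{0,1\}^\Delta \times \{0,1\}^L$ (or index the second coordinate by an $L$-bit ``codeword''), where the first block behaves exactly as in \cref{thm:main_lb} and the second block acts as a ``password'': $\M$ reveals the rare high-value output only when queried on an input whose second block matches a secret string $s$. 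Concretely, set $\Delta = \lceil \frac{\ln 8T}{\eps}\rceil$ and $q = e^{-\eps\Delta}\le \tfrac1{8T}$, pick $s\in\{0,1\}^L$ uniformly, let $d_0 = (0^\Delta, s)$ and $d_1 = (1^\Delta, s)$, and define $\M$ so that on inputs with second block $=s$ it behaves as the two-point mechanism of \cref{thm:main_lb} on the first block, while on inputs with second block $\ne s$ it returns only low-value outputs $(r',0)$ (with distribution depending smoothly on distance, so the partial specification is $O(\eps)$-DP; the extension lemma then gives a genuine DP mechanism, paying another factor of $2$, hence the $\eps\le 3$ and the worse constants).

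The key steps, in order, are: (1) \textbf{Build the embedded mechanism and check privacy.} Verify that on the support $B$ of inputs we care about (the Hamming ball around $d_0,d_1$ of radius $\Delta$, within the slice with second block $s$, plus nearby second blocks), the ratios satisfy the $\eps$-DP condition, then invoke \cref{prop:extension} to get a $2\eps$-DP mechanism $\M$ on all of $\{0,1\}^{\Delta+L}$. (2) \textbf{Bound the number of ``useful'' queries on input $d_0$.} Let $E$ be the event that $\mathcal{A}^\M(d_0)$ makes at most $2T$ total oracle calls; Markov gives $\Pr[E]\ge\tfrac12$. The crucial point: since $\mathcal{A}$ on input $d_0$ has no information about $s$ beyond what $\M$'s outputs reveal (and $\M$'s outputs on second-block-$\ne s$ inputs are the \emph{same} distribution regardless of $s$), the probability that any of the $\le 2T$ queries has its second block equal to $s$ is negligible if $L$ is large enough — take $L = \lceil\log_2(8T^2)\rceil$ or so, so this is at most $2T\cdot 2^{-L}\le\tfrac1{4T}$. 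Conditioned on never hitting the slice $s$, $\mathcal{A}^\M(d_0)$ only ever sees $(r',0)$ outputs, so it must output $(r',0)$. Combining with $\Pr[E]\ge\tfrac12$ and the other small error terms, $\Pr[\mathcal{A}^\M(d_0) = (r',0)]\ge \tfrac14$ (exactly as in \cref{thm:main_lb}, up to the constants). (3) \textbf{Apply group privacy across distance $\Delta$.} Since $|d_0-d_1|_H = \Delta$ and $\M$ is $2\eps$-DP, $\mathcal{A}^\M$ is $(2c\eps,\delta)$-DP, so \cref{prop:ed_dp_probs} (valid for $\delta < e^{-2c\eps\Delta}/(8\Delta)$, which the hypothesis $\delta < \tfrac{\eps\gamma}{40\ln 8T}$ will imply) gives $\Pr[\mathcal{A}^\M(d_1) = (r',0)]\ge e^{-2c\eps\Delta}/8$. (4) \textbf{Contradict utility.} At $d_1$, $Median(\val(\M(d_1))) = 1$ since $1-q > \tfrac12$, so utility forces $\Pr[\val(\mathcal{A}^\M(d_1)) = 0]\le\gamma$, i.e. $\gamma \ge e^{-2c\eps\Delta}/8 \ge \tfrac{1}{8\cdot(8T)^{4c}}$ using $2\eps\Delta \le 4\ln 8T$; rearranging yields $T \ge (8\gamma)^{-1/(4c)}/8$. (The stated $1/(40c)$ and the $\delta$ and constant factors absorb the extra slack from the extension lemma and from choosing $L$; I would track these loosely rather than optimize.)

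I expect the main obstacle to be step (2): making precise the claim that querying $\M$ on datasets \emph{other} than $d_0$ cannot help $\mathcal{A}$ learn the secret $s$. The cleanest way is an information-theoretic / coupling argument: couple the runs of $\mathcal{A}^\M(d_0)$ under two different secrets $s, s'$ so that they agree until the first time a query lands in the slice $s$ (or $s'$); since that event has probability $\le 2T\cdot 2^{-L}$ per the union bound over a run of bounded length, the outputs are statistically close to a run against a mechanism that \emph{never} produces $(r,1)$. One must be slightly careful that $\mathcal{A}$ is adaptive — the $i$-th query may depend on the first $i-1$ outputs — but since all outputs seen so far are (conditionally on not-yet-hitting $s$) independent of $s$, a martingale/union-bound over the at most $2T$ steps handles adaptivity. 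A secondary nuisance is ensuring the embedded partial mechanism is genuinely $\eps$-DP on a large enough domain $B$ so that \cref{prop:extension} applies cleanly; here one wants $B$ to be a product of Hamming balls and the output distribution to interpolate (e.g. make the probability of $(r,1)$ decay by a factor $e^{-\eps}$ per unit Hamming distance from the distinguished slice), which is routine but must be set up so the ratio bound $\le \exp(\eps|d-d'|_H)$ holds for all pairs in $B$.
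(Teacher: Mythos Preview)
Your high-level plan is right, but step (2) has a fatal self-contradiction. You set $d_0 = (0^\Delta, s)$ and then argue that ``$\mathcal{A}$ on input $d_0$ has no information about $s$ beyond what $\M$'s outputs reveal.'' But $\mathcal{A}$ is handed $d_0$ explicitly, and $d_0$ \emph{contains} $s$. So $\mathcal{A}$ can simply read off $s$ and query $\M$ at $(1^\Delta,s)$, which returns $(r,1)$ with probability $1-q$; your lower bound $\Pr[\mathcal{A}^{\M}(d_0)=(r',0)]\ge\tfrac14$ then fails. Note that the Utility hypothesis does not rescue you here: at $d_0$ the median of $\M$ is $0$, so Utility places no constraint on $\mathcal{A}^\M(d_0)$. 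The natural repair---taking $d_0$ independent of $s$, say $d_0=(0^\Delta,0^L)$---creates a second problem: once you smooth the mechanism to be $\eps$-DP (as you acknowledge you must), the region where $\M$ can return $(r,1)$ with non-negligible probability is not the single slice ``second block $=s$'' but a Hamming ball around $(1^\Delta,s)$, and with $L=\Theta(\log T)$ this ball is not a small fraction of $\{0,1\}^L$ for small $\eps$ (indeed $L\ll\Delta$), so the union bound over $2T$ queries no longer gives anything.

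The paper's construction fixes both issues at once by dispensing with the product structure. It works in $\{0,1\}^{10\Delta}$, chooses the ``good'' point $v$ uniformly at random, and runs $\mathcal{A}$ on the \emph{fixed} input $d_0=0^{10\Delta}$, which carries no information about $v$. The mechanism $\M_v$ is specified to be the usual two-point mechanism at $v$ and to be the fixed ``mostly-$(r',0)$'' distribution on all of $B_v^c$ (the complement of the radius-$\Delta$ ball around $v$), then extended by \cref{prop:extension}. Since $|B_v|/2^{10\Delta}\le (10e/1024)^\Delta\le e^{-3\Delta}\le 1/(8T)$, any single query lands in $B_v$ with probability at most $1/(8T)$ over the choice of $v$; outside $B_v$ the output law is literally the same for every $v$, so adaptivity is harmless and a union bound over $2T$ queries goes through. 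The group-privacy step is then applied between $d_0$ and $v$, at distance at most $10\Delta$ (not $\Delta$), which is where the constant $40$ in the exponent comes from. Your steps (3) and (4) are correct once (2) is fixed this way, with the distance $10\Delta$ in place of $\Delta$.
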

\begin{proof}
With some foresight, we set $\Delta = \lceil \frac{\ln 8T}{\eps}\rceil$, and set $q = e^{-\eps\Delta} \leq \frac 1 {8T}$.
Fix a vector $v \in \{0,1\}^{10\Delta}$, and let $B_v = \{u : |u-v|_H \leq \Delta\}$ be the radius $\Delta$ Hamming ball centered at $v$. We now define a mechanism $\M_v$ such that:
\begin{align*}
    \M_v(v) = \left\{ \begin{array}{ll} (r, 1) & \mbox{ w.p. } 1- q\\ (r', 0) & \mbox{ w.p. } q\end{array}\right. &\;\;\mbox{ and }\;\; \forall u \not\in B_v: \M_v(u) = \left\{\begin{array}{ll} (r, 1) & \mbox{ w.p. } q\\ (r', 0) & \mbox{ w.p. }1-q\end{array}\right. .
\end{align*}
Note that this specification on $\{v\} \cup B_v^c$ satisfies $\eps$-DP as $\frac{1-q}{q} \leq \frac 1 q  \leq e^{\eps |u-v|_H}$ for $u \not\in B_v$. By the extension lemma (\cref{prop:extension}), this partial mechanism can be extended to a $2\eps$-DP mechanism on $\{0,1\}^{10\Delta}$.

Consider a run of $\mathcal{A}^{\M_v}(d_0)$ where $d_0 = 0^{10\Delta}$.  Let $E$ be the event that $\mathcal{A}^{\M_v}$ on input $d_0$ makes at most $2T$ calls to $\M$ on any inputs. By Markov's inequality,  $\Pr[E] \geq \frac 1 2$.
For $v$ chosen uniformly at random, the probability that a single dataset $d$ chosen by $\mathcal{A}^{\M_v}(d_0)$ lies in $B_v$ is given by
\begin{equation*}
     \frac{|B_v|}{2^{10\Delta}} = \frac{{\binom{10\Delta}{\Delta}}}{2^{10\Delta}} \leq (\frac{10e}{1024})^\Delta \leq e^{-3\Delta} \leq \frac 1 {8T}.
\end{equation*}
Thus the likelihood of seeing an $(r,1)$ on any given call to $\M_v(d)$, when $v$ is chosen at random is
\begin{align*}
    \Pr[\M_v(d) = (r,1)] &\leq \Pr[d \in B_v] + \Pr[\M_v(d) = (r,1) \mid d \not\in B_v]\\
    &\leq \frac{1}{8T} + q \leq \frac{1}{4T}.
\end{align*}
Here and in the rest of the proof, the probability is taken over both the choice of $v$ and the randomness of $\mathcal{A}^{\M_v}$ and $\M_v$. Conditioned on the event $E$, the probability that $(r, 1)$ is the  output of \emph{any} of the at most $2T$ runs of $\M$ is at most $\frac{2T}{4T} \leq \frac 1 2$. Since $\mathcal{A}^\M$ has never seen an $(r,1)$ output in this case, it follows that $Pr[\mathcal{A}^{\M_v}(d_0) = (r, 1) \mid E] \leq \frac 1 2$, so that $Pr[\mathcal{A}^{\M}(d_0) = (r', 0) \mid E] \geq \frac 1 2$. It follows that $Pr[\mathcal{A}^{\M}(d_0) = (r', 0)] \geq \frac 1 4$.

The rest of the proof is now essentially identical to the earlier proof. As $\M_v$ is $2\eps$-DP, the privacy property of $\mathcal{A}^{\M_v}$ implies that it is $(2c\eps, \delta)$-DP. By \cref{prop:ed_dp_probs}, it follows that for $\delta < e^{-20c\eps\Delta}/80\Delta$,
\begin{align*}
    \Pr[\mathcal{A}^\M(d_1) = (r',0)] &\geq e^{-20c\eps\Delta} / 8.
\end{align*}

By the utility property, the left hand side is at most $\gamma$.
Since $e^{20c\eps\Delta} \leq e^{40c\ln 8T}$, it follows that
  $\gamma  \geq \frac{1}{8(8T)^{40c}}$.
Rearranging, we get the claimed result.
\end{proof}

\subsection{Lower Bounds for Hyperparameter Tuning}

We next show that the same essential arguments can be used to show a lower bound for private hyperparameter tuning.
\begin{theorem}
 \label{thm:main_lb_hyperparameter}
Let $\mathcal{A}$ be an oracle algorithm that satisfies the following properties:
\begin{description}
    \item{Privacy:} For any $\eps$-DP mechanism $\M : D^n \times [K] \rightarrow (R \times \Re)$, $\mathcal{A}^{\M} : D^n \rightarrow (R \times \Re)$ is $(c\eps,\delta)$-DP.
    \item{Utility:} For any input $d$, $\Pr[\val(\mathcal{A}^\M(d)) \geq max_{k \in [K]} Median(\val(\M(d, k)))] \geq 1-\gamma$.
    \item{Runtime:} $\mathcal{A}$ on input $d$ makes at most $TK$ calls to $\M$ (on any inputs) in expectation for some $T \geq e^{\eps}$.
\end{description}
Then for $\eps \leq 3$, $\delta < \frac {\eps\gamma} {40\ln 8T}$, $T \geq  (8\gamma)^{-\frac{1}{40c}} / 8$, or equivalently, $c \geq \frac{\ln \frac 1 {8\gamma}}{40\ln 8T}$.
\end{theorem}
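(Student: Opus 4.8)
The plan is to adapt the embedding construction of \cref{thm:main_lb_coded} to the hyperparameter setting, where the mechanism now takes an extra argument $k \in [K]$. The core intuition is unchanged: we want a mechanism on which $\mathcal{A}$, run on the all-zeros dataset $d_0$, almost never sees the ``good'' output $(r,1)$, yet the median-over-$k$ utility bound forces $\mathcal{A}$ to output $(r,1)$ with probability $1-\gamma$ on a nearby dataset $v$ — and the $K$-fold budget increase in calls (now $TK$ instead of $T$) must be absorbed by spreading the hard instance across the $K$ coordinates.

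First I would fix $\Delta = \lceil \ln(8T)/\eps \rceil$ and $q = e^{-\eps\Delta} \le 1/(8T)$, exactly as before, and work over $\{0,1\}^{10\Delta}$. For a target vector $v \in \{0,1\}^{10\Delta}$, define $\M_v(u,k)$ as follows: for \emph{every} $k \in [K]$, set $\M_v(v,k)$ to output $(r,1)$ w.p.\ $1-q$ and $(r',0)$ w.p.\ $q$, and for $u \notin B_v$ (the radius-$\Delta$ Hamming ball around $v$) set $\M_v(u,k) = (r,1)$ w.p.\ $q$ and $(r',0)$ w.p.\ $1-q$. This is $\eps$-DP on $\{v\}\cup B_v^c$ by the same computation $\tfrac{1-q}{q} \le e^{\eps|u-v|_H}$, and the extension lemma (\cref{prop:extension}) promotes it to a $2\eps$-DP mechanism on the whole cube. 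The median of $\val(\M_v(v,k))$ is $1$ for each $k$, so $\max_{k}Median(\val(\M_v(v,k))) = 1$, and the utility property forces $\Pr[\val(\mathcal{A}^{\M_v}(v)) \ge 1] = \Pr[\mathcal{A}^{\M_v}(v) = (r,1)] \ge 1-\gamma$.

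Next I would run the same argument on $d_0 = 0^{10\Delta}$. Let $E$ be the event that $\mathcal{A}^{\M_v}(d_0)$ makes at most $2TK$ calls to $\M$ (across all inputs and all $k$); Markov gives $\Pr[E]\ge \tfrac12$. For a uniformly random $v$, any single dataset $d$ that $\mathcal{A}$ queries lands in $B_v$ with probability at most $\binom{10\Delta}{\Delta}/2^{10\Delta} \le e^{-3\Delta} \le 1/(8T)$, so for any fixed query $(d,k)$ the chance of seeing $(r,1)$ is at most $\Pr[d\in B_v] + q \le 1/(4T)$ — crucially this bound is \emph{per call} and does not depend on $k$, so the union bound over $2TK$ calls under $E$ gives that $(r,1)$ appears with probability at most $2TK/(4T) = K/2$. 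Here is the one place the argument genuinely differs from \cref{thm:main_lb_coded}: a naive union bound over $2TK$ calls is too weak. I would instead bound per-coordinate: for each fixed $k$, the number of calls with that second argument equal to $k$ is $2T$ in expectation, and conditioned on $E$ a coupling/averaging argument caps the expected number of $(r,1)$-sightings at $\sum_k 2T \cdot \tfrac{1}{4T} = K/2$ — wait, that is still $K/2$, not $\tfrac12$. The correct fix is that the $1/(8T)$ bound on $\Pr[d\in B_v]$ should be sharpened to $1/(8TK)$ by taking $\Delta$ slightly larger (absorbing a $+\ln K$ term, which only changes constants since the theorem's conclusion is stated in terms of $T$ with $K$ folded into the call budget as $TK$), and likewise using $q \le 1/(8TK)$; then the per-call probability is $\le 1/(4TK)$, the union bound over $2TK$ calls gives $\le \tfrac12$, and we recover $\Pr[\mathcal{A}^{\M_v}(d_0) = (r',0)\mid E] \ge \tfrac12$, hence $\Pr[\mathcal{A}^{\M_v}(d_0) = (r',0)] \ge \tfrac14$ over the joint randomness of $v$, $\mathcal{A}$, and $\M_v$.

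Finally, the endgame is identical to \cref{thm:main_lb_coded}: $\M_v$ is $2\eps$-DP, so $\mathcal{A}^{\M_v}$ is $(2c\eps,\delta)$-DP, and $|d_0 - v|_H \le 10\Delta$, so \cref{prop:ed_dp_probs} (for $\delta$ below the stated threshold) transfers the probability-$\tfrac14$ event to $v$: $\Pr[\mathcal{A}^{\M_v}(v) = (r',0)] \ge e^{-20c\eps\Delta}/8$. But the utility property says $\Pr[\mathcal{A}^{\M_v}(v) = (r',0)] = 1 - \Pr[\mathcal{A}^{\M_v}(v)=(r,1)] \le \gamma$, so $\gamma \ge e^{-20c\eps\Delta}/8 \ge 1/(8(8T)^{40c})$, and rearranging gives $T \ge (8\gamma)^{-1/(40c)}/8$. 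The main obstacle — as flagged above — is getting the combinatorics of the union bound over $TK$ calls to close with the right constants; this is handled by choosing $\Delta$ to beat $TK$ rather than $T$ in the tail bounds, which costs only an additive $\ln K$ in $\Delta$ and hence nothing in the final (constant-factor) statement once $K$ is carried inside the call budget.
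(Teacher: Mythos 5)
Your construction activates the hard instance at \emph{every} hyperparameter index $k$, and you correctly notice that the union bound over $2TK$ calls then only gives $K/2$. But your proposed fix --- enlarging $\Delta$ to $\lceil \ln(8TK)/\eps\rceil$ so that $q$ and $\Pr[d\in B_v]$ are both $\le 1/(8TK)$ --- does not ``cost nothing.'' The group-privacy step at the end pays $e^{-20c\eps\Delta}$ over a Hamming distance of $10\Delta$, so with the enlarged $\Delta$ you get $\gamma \geq \tfrac{1}{8}(8TK)^{-40c}$, i.e.\ $T \geq (8\gamma)^{-1/(40c)}/(8K)$. This is weaker than the stated conclusion $T \geq (8\gamma)^{-1/(40c)}/8$ by exactly a factor of $K$: it only shows the \emph{total} call budget $TK$ must be $\poly(1/\gamma)$, not $K\cdot\poly(1/\gamma)$, which is precisely the content the theorem is after (and what the informal version in the introduction advertises as $T \geq K\gamma^{\Omega(1/c)}$).

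The paper's construction avoids this by choosing the active index $k\in[K]$ \emph{uniformly at random} along with $v$, and making $\M_{v,k}(\cdot,j) \equiv (r',0)$ deterministically for every $j\neq k$. A query $(d,j)$ can then only see $(r,1)$ on the event $j=k$, which has probability $1/K$ over the hidden randomness, so the per-call success probability becomes $\Pr[j=k\wedge d\in B_v] + \Pr[j=k]\cdot q \leq \tfrac{1}{8TK}+\tfrac{q}{K} \leq \tfrac{1}{4TK}$ \emph{without} touching $\Delta$. The factor $1/K$ comes from the algorithm's ignorance of which coordinate is live, not from shrinking $q$; $\Delta$ stays at $\lceil\ln(8T)/\eps\rceil$, the group-privacy loss stays at $(8T)^{40c}$, and the full $K$ survives into the final bound. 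Everything else in your write-up (the ball $B_v$, the extension lemma, the Markov/conditioning step, and the endgame via \cref{prop:ed_dp_probs}) matches the paper.
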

\begin{proof}
We set $\Delta = \lceil \frac{\ln 8T}{\eps}\rceil$, and set $q = e^{-\eps\Delta} \leq \frac 1 {8T}$.
For a vector $v \in \{0,1\}^{10\Delta}$, and let $B_v = \{u : |u-v|_H \leq \Delta\}$ be the radius $\Delta$ Hamming ball centered at $v$. For a parameter setting $k \in [K]$, we now define a mechanism $\M_{v,k}$ such that:
\begin{align*}
    &\,\M_{v,k}(v,k) = \left\{ \begin{array}{ll} (r, 1) & \mbox{ w.p. } 1- q\\ (r', 0) & \mbox{ w.p. } q\end{array}\right. \;\;\mbox{, }\\ \forall u \not\in B_v: &\,\M_{v,k}(u,k) = \left\{\begin{array}{ll} (r, 1) & \mbox{ w.p. } q\\ (r', 0) & \mbox{ w.p. }1-q\end{array}\right. \\ \;\;\mbox{ and }\;\; \forall j \neq k, \forall u \in \{0,1\}^{10\Delta}: &\,\M_{v,k}(u,j) = (r',0) \mbox{ w.p. } 1.
\end{align*}
By the extension lemma (\cref{prop:extension}), for each $j \in [K]$, the partial mechanism $\M_{v,k}(\cdot, j)$ can be extended to a $2\eps$-DP mechanism on $\{0,1\}^{10\Delta}$.

Consider a run of $\mathcal{A}^{\M_{v,k}}(d_0)$ where $d_0 = 0^{10\Delta}$.  Let $E$ be the event that $\mathcal{A}^{\M_{v,k}}$ on input $d_0$ makes at most $2TK$ calls to $\M$ on any inputs. By Markov's inequality,  $\Pr[E] \geq \frac 1 2$.
Each call to $\M$ made by $\mathcal{A}^{\M_{v,k}}(d_0)$ consists of a single dataset $d$ and a single $j \in [K]$.
When $v, k$ are chosen uniformly at random, the probability that $d \in B_v$ is
\begin{equation*}
     \frac{|B_v|}{2^{10\Delta}} = \frac{{\binom{10\Delta}{\Delta}}}{2^{10\Delta}} \leq (\frac{10e}{1024})^\Delta \leq e^{-3\Delta} \leq \frac 1 {8T}.
\end{equation*}
If the $j$ chosen by $\mathcal{A}^{\M_{v,k}}(d_0)$ is different from $k$, the likelihood of seeing an $(r,1)$ is zero. Thus the likelihood of seeing an $(r,1)$ on a given call to $\M$ is
\ifconf\begin{align*}
    \Pr[\M_{v,k}(d,j) = (r,1)] &\leq \Pr[j=k \wedge d \in B_v] \\&\;\;\;\;\;\;+ \Pr[j=k] \cdot \Pr[\M_v(d) = (r,1) \mid j=k \wedge d \not\in B_v]\\
    &\leq \frac{1}{8TK} + \frac{q}{K} \leq \frac{1}{4TK}.
\end{align*}
\else
\begin{align*}
    \Pr[\M_{v,k}(d,j) = (r,1)] &\leq \Pr[j=k \wedge d \in B_v] + \Pr[j=k] \cdot \Pr[\M_v(d) = (r,1) \mid j=k \wedge d \not\in B_v]\\
    &\leq \frac{1}{8TK} + \frac{q}{K} \leq \frac{1}{4TK}.
\end{align*}
\fi

Conditioned on the event $E$, the probability that $(r, 1)$ is the  output of \emph{any} of the at most $2TK$ runs of $\M$ is at most $\frac{2TK}{4TK} \leq \frac 1 2$. Since $\mathcal{A}^\M$ has never seen a $(r,1)$ output in this case, it follows that $Pr[\mathcal{A}^{\M_{v,k}}(d_0) = (r, 1) \mid E] \leq \frac 1 2$, so that $Pr[\mathcal{A}^{\M}(d_0) = (r', 0) \mid E] \geq \frac 1 2$. It follows that $Pr[\mathcal{A}^{\M}(d_0) = (r', 0)] \geq \frac 1 4$.

The rest of the proof is essentially identical. As $\M_{v,k}(\cdot, j)$ is $2\eps$-DP for each $j$, the privacy property of $\mathcal{A}^{\M_{v,k}}$ implies that it is $(2c\eps, \delta)$-DP. By \cref{prop:ed_dp_probs}, it follows that for small enough $\delta$.
\begin{align*}
     \Pr[\mathcal{A}^{\M_{v,k}}(v) = (r',0)] &\geq e^{-20c\eps\Delta} / 8.
\end{align*}
By the utility property, the left hand side is at most $\gamma$. 
Since $e^{20c\eps\Delta} \leq e^{40c\ln 8T}$, it follows that
 $\gamma  \geq \frac{1}{8(8T)^{40c}}$.
Rearranging, we get the claimed result.
\end{proof}

\section{Upper Bounds Trade-offs}
\label{sec:tradeoffs}
As mentioned earlier, two extreme points on the trade-off between the privacy overhead and the run time overhead are known. We show next the trade-off that is achievable by a simple combination of the basic approaches. We consider the goal of matching the median score, that is the common case for private repetition (labeled \emph{Repetition} below), as well as the goal of matching the $(1-\frac 1 K)$\textit{th} quantile, which is the case for metaselection and hyperparameter tuning (labeled \emph{Metaselection} below). We first state the results one gets from simple repetition.
\begin{theorem}
\label{thm:Atmax}
    Suppose that $\M : D^n \rightarrow (R \times \Re)$ is $(\eps, \delta)$-DP. Then for an integer $T$, the algorithm $\ATmax$ that on input $d$ runs $\M(d)$ $T$ times and releases the output with the highest quality score satisfies
    \begin{itemize}
        \item{Privacy:} ${\ATmax}^{\M} : D^n \rightarrow (R \times \Re)$ is $(T\eps,T\delta)$-DP, as well as $(T\eps^2 + \eps \sqrt{2T\ln \frac 1 {\delta'}}, T\delta + \delta')$-DP, for any $\delta' \in (0,1)$.
        \item{Runtime:} $\mathcal{A}$ on input $d$ makes exactly $T$ calls to $\M$.
        \item{Utility (Repetition):} For any input $d$, if $T\geq \log_2 \gamma$, then $$\Pr[\val({\ATmax}^\M(d)) \geq Median(\val(\M(d)))] \geq 1-\gamma.$$
        \item{Utility (Metaselection):} For any input $d$, and $K> 1$, if $T\geq K \ln \gamma$, then $$\Pr[\val({\ATmax}^\M(d)) \geq q_{1-\tfrac 1 K}(\M(d))] \geq 1-\gamma,$$ where the quantile $q_{1-\tfrac 1 K}$ is such that $\Pr[\M(d) \geq q_{1-\tfrac 1 K}] \geq \frac 1 K$.
    \end{itemize}
\end{theorem}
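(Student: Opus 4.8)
The plan is to dispatch the three groups of claims separately, since each reduces to a standard fact. For \textbf{privacy}, first observe that ${\ATmax}^{\M}$ is the non-adaptive $T$-fold composition of $\M$ run on the same input $d$, post-processed by the data-independent map that returns whichever of the $T$ output tuples has the largest second coordinate. Since post-processing preserves differential privacy, it suffices to bound the privacy of the $T$-tuple of independent copies of $\M(d)$. Basic composition of $(\eps,\delta)$-DP mechanisms (see e.g.\ \citet{DworkR14}) yields $(T\eps, T\delta)$-DP, and the advanced composition theorem \citep{DworkR14} gives that the same $T$-fold composition is $(T\eps^2 + \eps\sqrt{2T\ln\tfrac1{\delta'}},\, T\delta+\delta')$-DP for any $\delta'\in(0,1)$; one more appeal to post-processing transfers both bounds to ${\ATmax}^{\M}$ itself.

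The \textbf{runtime} bound is immediate from the construction: ${\ATmax}$ invokes $\M$ exactly $T$ times. For \textbf{utility} in the repetition case, set $m = Median(\val(\M(d)))$. By the definition of $Median$ as the supremum of thresholds $z$ with $\Pr[\val(\M(d))\ge z]\ge\tfrac12$, each independent run $o_i\sim\M(d)$ satisfies $\Pr[\val(o_i)<m]\le\tfrac12$; the value released by ${\ATmax}$ is $\max_{i\le T}\val(o_i)$, which is below $m$ only if all $T$ independent runs fall below $m$, an event of probability at most $2^{-T}\le\gamma$ whenever $T\ge\log_2\tfrac1\gamma$. The metaselection case is the identical argument with $\tfrac12$ replaced by $1-\tfrac1K$: since $\Pr[\val(\M(d))\ge q_{1-1/K}]\ge\tfrac1K$, each run falls below $q_{1-1/K}$ with probability at most $1-\tfrac1K$, all $T$ do so with probability at most $(1-\tfrac1K)^T\le e^{-T/K}$, and $e^{-T/K}\le\gamma$ once $T\ge K\ln\tfrac1\gamma$.

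I do not expect a genuine obstacle here. The only points needing care are that the released tuple is genuinely one of the sampled outputs --- so the privacy bound is pure post-processing and the utility bound only needs the marginal law of a single run of $\M$ --- and that the precise form of the advanced-composition parameter $\eps'$ is read off from the appropriate version of that theorem. Everything else is a union/product bound over independent runs.
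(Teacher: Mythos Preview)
Your proposal is correct and follows exactly the approach the paper takes: the paper's proof is a single sentence invoking simple composition, advanced composition, and post-processing for privacy, and declaring the utility and runtime analyses ``straightforward,'' which is precisely the argument you have spelled out in detail. You have also silently corrected the evident typos in the statement ($T\ge\log_2\gamma$ and $T\ge K\ln\gamma$ should be $T\ge\log_2\tfrac1\gamma$ and $T\ge K\ln\tfrac1\gamma$), which is the right reading.
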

\begin{proof}
     The privacy follows from simple composition and advanced composition applied and post-processing. The utility and run time analyses are straightforward.
\end{proof}

\noindent The following result was shown by\ifconf Liu and Talwar~\cite{LiuT19}\else~\citet{LiuT19}\fi.
\begin{theorem}
\label{thm:LT}
    Suppose that $\M : D^n \rightarrow (R \times \Re)$ is $\eps$-DP and $\gamma > 0$. Then the algorithm $\A_{LT,\gamma}$ satisfies
    \begin{itemize}
        \item{Privacy:} ${\A_{LT,\gamma}}^{\M} : D^n \rightarrow (R \times \Re)$ is $3\eps$-DP.
        \item{Runtime:} $\A_{LT, \gamma}^{\M}$ on input $d$ makes in expectation  $O(\frac 1 \gamma)$ calls to $\M$.
        \item{Utility (Repetition):} For any input $d$, $\Pr[\val(\A_{LT,\gamma}^\M(d)) \geq Median(\val(\M(d)))] \geq 1-\gamma$.
        \item{Utility (Metaselection):} For any input $d$, and $K >1$, $\Pr[\val({\ATmax}^\M(d)) \geq q_{1-\tfrac 1 K}(\M(d))] \geq 1-K\gamma$, where the quantile $q_{1-\tfrac 1 K}$ is such that $\Pr[\M(d) \geq q_{1-\tfrac 1 K}] \geq \frac 1 K$.
    \end{itemize}
\end{theorem}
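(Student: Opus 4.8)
The plan is to analyze the Liu--Talwar amplifier, which I would instantiate as follows: draw a count $N$ from the geometric law $\Pr[N=n]=\gamma(1-\gamma)^{n-1}$ for $n\ge 1$, run $\M(d)$ independently $N$ times to obtain $(o_1,v_1),\dots,(o_N,v_N)$, and release the pair attaining $\max_i v_i$, breaking ties by an independent uniform tag attached to each run so that the released $\val$ is unaffected. The runtime claim is then immediate: $\mathbb{E}[N]=1/\gamma$, so $\A_{LT,\gamma}$ makes $O(1/\gamma)$ calls to $\M$ in expectation.

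For utility I would condition on $N=n$. In the repetition setting each run independently satisfies $\val(\M(d))\ge Median(\val(\M(d)))$ with probability at least $\tfrac{1}{2}$ (by the definition of the median as a sup, which is attained in the tail), so the released value misses the median only if all $n$ runs miss, an event of probability at most $2^{-n}$; averaging against the geometric law,
\[
\Pr[\text{failure}]\le\sum_{n\ge1}\gamma(1-\gamma)^{n-1}2^{-n}=\frac{\gamma}{1+\gamma}\le\gamma.
\]
For metaselection the only change is that each run has value at least the quantile $q_{1-1/K}$ with probability at least $1/K$; the same computation with $2^{-n}$ replaced by $(1-1/K)^n$ gives failure probability $\tfrac{\gamma(1-1/K)}{1/K+\gamma(1-1/K)}\le\gamma(K-1)\le K\gamma$, which is the stated bound.

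The main work --- and the step I expect to be the real obstacle --- is the $3\eps$-DP guarantee. Here is the approach. Passing to the tie-broken value space (so values are effectively continuous), the density of the released pair $w=(o,v)$ under dataset $d$ can be written by summing over how many runs there were and which one produced $w$:
\[
g_d(w)=p_d(w)\sum_{n\ge1}\gamma(1-\gamma)^{n-1}\,n\,P_{<v,d}^{\,n-1}=\frac{\gamma\,p_d(w)}{\bigl(1-(1-\gamma)P_{<v,d}\bigr)^2},
\]
where $p_d(w)$ is the density of $\M(d)$ at $w$, $P_{<v,d}=\Pr[\val(\M(d))<v]$, and we used $\sum_{n\ge1}nx^{n-1}=(1-x)^{-2}$. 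For neighboring $d,d'$ the first factor satisfies $p_d(w)/p_{d'}(w)\le e^{\eps}$ by $\eps$-DP of $\M$. For the denominator, rewrite $1-(1-\gamma)P_{<v,d}=\gamma+(1-\gamma)\Pr[\val(\M(d))\ge v]$; applying $\eps$-DP of $\M$ to the event $\{\val(\M(\cdot))\ge v\}$ together with $\gamma\le e^{\eps}\gamma$ shows this quantity changes by at most a factor $e^{\eps}$ between $d$ and $d'$, so the squared reciprocal contributes at most $e^{2\eps}$. Multiplying, $g_d(w)/g_{d'}(w)\le e^{3\eps}$ pointwise, which yields pure $3\eps$-DP for every output event.

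The delicate point, and the one I would spend care on, is the discrete-value case: if a run ties the current maximum the clean product form above breaks. I would handle this exactly through the uniform tie-breaking tags (equivalently, by running the estimate on both $\{\val>t\}$ and $\{\val\ge t\}$ and passing to limits), which makes the value space effectively atomless and leaves the density computation --- and hence the factor-of-three bound --- unchanged. No other part of the argument is subtle; privacy follows for all output events from the pointwise density-ratio bound, and utility and runtime are the elementary geometric-series estimates above.
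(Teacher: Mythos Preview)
The paper does not give its own proof of this theorem; it simply attributes the result to Liu and Talwar~\cite{LiuT19} and states it without argument. Your proposal is a correct reconstruction of precisely that Liu--Talwar argument: geometric random stopping gives $\mathbb{E}[N]=1/\gamma$; the utility bounds are the elementary geometric-series computations you wrote; and the $3\eps$ privacy comes from the closed form $g_d(w)=\gamma\,p_d(w)/\bigl(\gamma+(1-\gamma)\Pr[\val(\M(d))\ge v]\bigr)^2$, which contributes one factor of $e^{\eps}$ from the numerator and two from the squared denominator. The tie-breaking via independent uniform tags (followed by post-processing them away) is exactly the standard device used to make the density computation rigorous in the atomic case, so there is no gap there either.
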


\noindent These two algorithms can be combined, by using \cref{thm:LT} to boost the success probability to $1-\gamma^{\frac{1}{c}}$, and then repeating that algorithm similarly to \cref{thm:Atmax} to further boost the success probability to $(1-\gamma)$.
\begin{theorem}
\label{thm:LTmax}
    Suppose that $\M : D^n \rightarrow (R \times \Re)$ is $\eps$-DP and $c>1$ be an integer. Then the algorithm $\A$ that runs $\ATmax$ (for $T=c$) with an ${\A_{LT,\gamma^{\frac 1 c}}^\M}$ oracle satisfies
    \begin{itemize}
        \item{Privacy:} $\A : D^n \rightarrow (R \times \Re)$ is $3c\eps$-DP.
        \item{Runtime:} $\A$ on input $d$ makes in expectation  $O(\frac c {\gamma^{\frac 1 c}})$ calls to $\M$.
        \item{Utility:} For any input $d$, $\Pr[\val(\A(d)) \geq Median(\val(\M(d)))] \geq 1-\gamma$.
    \end{itemize}
\end{theorem}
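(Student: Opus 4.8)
The plan is to instantiate $\ATmax$ from \cref{thm:Atmax} not with $\M$ directly but with the \emph{composed} oracle $\M' := \A_{LT,\gamma^{1/c}}^{\M}$, and then read off each of the three claims by chaining the guarantees of \cref{thm:LT} (the inner layer) and \cref{thm:Atmax} (the outer layer). For privacy, I would first invoke \cref{thm:LT} to conclude that $\M'$, viewed as a randomized map $d \mapsto \M'(d)$ with the randomness of $\A_{LT}$ and of $\M$ folded together, is $3\eps$-DP. The algorithm $\A$ then runs $T=c$ independent copies of $\M'$ and releases the output of largest value, which is a post-processing of the $c$-fold independent composition of $\M'$. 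Basic composition gives $3c\eps$-DP for the composition, and post-processing preserves this, so $\A$ is $3c\eps$-DP; since everything here is pure DP there is no $\delta$ to track.

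For runtime, each invocation of $\M'$ makes $O(1/\gamma^{1/c})$ calls to $\M$ in expectation by \cref{thm:LT}, and $\ATmax$ with $T=c$ makes exactly $c$ such invocations, so by linearity of expectation the total expected number of calls to $\M$ is $c \cdot O(1/\gamma^{1/c}) = O(c/\gamma^{1/c})$. For utility, observe that $\A(d)$ returns the highest-valued output among $c$ independent runs of $\M'(d)$, so $\val(\A(d)) \geq Median(\val(\M(d)))$ holds as soon as a \emph{single} one of these $c$ runs returns a value at least $Median(\val(\M(d)))$. Applying the repetition utility guarantee of \cref{thm:LT} with failure probability set to $\gamma^{1/c}$, each of the $c$ runs independently achieves this with probability at least $1-\gamma^{1/c}$, hence the probability that all of them fail is at most $(\gamma^{1/c})^c = \gamma$, giving $\Pr[\val(\A(d)) \geq Median(\val(\M(d)))] \geq 1-\gamma$.

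There is no substantive obstacle here: the result is a two-level composition, and each of the three bullets follows by a one-line application of an already-established theorem. The two points that warrant a moment's care are (i) ensuring the $c$ copies of $\M'$ use fresh independent randomness — for both $\A_{LT}$ and its internal calls to $\M$ — so that the $c$ failure events are genuinely independent and the amplification $\gamma^{1/c}\mapsto\gamma$ is valid; and (ii) noting that the inner $\A_{LT}$ layer already competes against $Median(\val(\M(d)))$ rather than against the median of $\M'$, so the outer $\ATmax$ layer does not shift the target and one need not chain two distinct quantile guarantees. One should also record the mild hypothesis that $c$ is an integer, which is what makes $T=c$ a valid parameter for $\ATmax$.
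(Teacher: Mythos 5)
Your proposal is correct and follows essentially the same route as the paper's proof: treat $\A_{LT,\gamma^{1/c}}^{\M}$ as a $3\eps$-DP oracle, apply \cref{thm:Atmax} with $T=c$ for privacy and runtime, and multiply the $c$ independent failure probabilities $\gamma^{1/c}$ to get $\gamma$. The extra remarks on independence of the copies and on the target quantile not shifting are sound but not points where the paper's argument differs.
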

\begin{proof}
    By \cref{thm:LT},  ${\A_{LT,\gamma^{\frac 1 c}}^\M}$ is $3\eps$-DP. \cref{thm:Atmax} now implies that $\A$ is $3c\eps$-DP. For the chosen parameters, the utility bound in \cref{thm:LT} implies that $\Pr[\val(\A_{LT,\gamma^{\frac 1 c}}^\M(d)) \geq Median(\val(\M(d)))] \geq 1-\gamma^{\frac 1 c}$. Since this algorithm is repeated $c$ times in $\A$, the failure probability gets reduced to $\gamma$. Finally, the run time bound follows by combining the run time results in \cref{thm:LT} and \cref{thm:Atmax}.
\end{proof}

\noindent We note that for $c < \log \gamma^{-1}/\log\log \gamma^{-1}$, $c^c < \gamma^{-1}$ so that this bound of $O(\frac c {\gamma^{\frac 1 c}})$ is $O(\gamma^{-2/c})$. Thus it matches the lower bound in \cref{thm:main_lb_coded} up to constant factors in $c$.

An identical argument yields the following result for metaselection.
\begin{theorem}
\label{thm:LTmax-metaselection}
    Suppose that $\M : D^n \rightarrow (R \times \Re)$ is $\eps$-DP and let $c>1$ be an integer. Then the algorithm $\A$ that runs $\ATmax$ (for $t=c$) with an ${\A_{LT,\gamma^{\frac 1 c}/K}^\M}$ oracle satisfies
    \begin{itemize}
        \item{Privacy:} $\A : D^n \rightarrow (R \times \Re)$ is $3c\eps$-DP.
        \item{Runtime:} $\A$ on input $d$ makes in expectation  $O(\frac {Kc} {\gamma^{\frac 1 c}})$ calls to $\M$.
        \item{Utility:} For any input $d$, $\Pr[\val({\A}^\M(d)) \geq q_{1-\tfrac 1 K}(\M(d))] \geq 1-\gamma$, where the quantile $q_{1-\tfrac 1 K}$ is such that $\Pr[\M(d) \geq q_{1-\tfrac 1 K}] \geq \frac 1 K$.
    \end{itemize}
\end{theorem}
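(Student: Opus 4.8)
The plan is to mirror the proof of \cref{thm:LTmax} essentially verbatim, swapping in the metaselection clause of \cref{thm:LT} in place of its repetition clause. First I would dispatch privacy: \cref{thm:LT} guarantees that $\A_{LT,\gamma^{1/c}/K}^\M$ is $3\eps$-DP whenever $\M$ is $\eps$-DP, and since $\A$ merely runs that algorithm $c$ times and post-processes the results (reporting the tuple with the largest reported score), \cref{thm:Atmax} — basic composition over the $c$ runs followed by post-processing — yields that $\A$ is $3c\eps$-DP.

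Next I would handle utility. Write $\gamma' = \gamma^{1/c}/K$. The metaselection clause of \cref{thm:LT} says that a single execution of $\A_{LT,\gamma'}^\M(d)$ returns a pair whose value meets the fixed threshold $q_{1-1/K}(\M(d))$ except with probability $K\gamma' = \gamma^{1/c}$. The $c$ executions performed inside $\ATmax$ draw fresh internal randomness and fresh samples from $\M$, so they are mutually independent, each clearing the threshold with probability at least $1-\gamma^{1/c}$. Selecting the coordinate of largest reported score can only produce a value below the threshold when all $c$ executions fall below it, an event of probability at most $(\gamma^{1/c})^c = \gamma$; hence $\A$ succeeds with probability at least $1-\gamma$. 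The runtime bound is then immediate: \cref{thm:LT} bounds the expected number of oracle calls of each $\A_{LT,\gamma'}^\M$ execution by $O(1/\gamma') = O(K/\gamma^{1/c})$, and linearity of expectation over the $c$ executions gives $O(Kc/\gamma^{1/c})$ expected calls.

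I do not anticipate a genuine obstacle here; the two things worth checking explicitly are that the success predicate ``$\val \geq q_{1-1/K}(\M(d))$'' is monotone under the argmax operation — so that one successful execution among the $c$ already forces success of $\A$ — and that the tuple output by $\ATmax$ is one actually produced by a call to $\M$, so that $\A$ respects the structural restriction placed on amplification algorithms. Both are immediate from the definitions. The only quantitative point is the arithmetic $K \cdot (\gamma^{1/c}/K) = \gamma^{1/c}$ and $(\gamma^{1/c})^c = \gamma$, which is exactly the bookkeeping already carried out in \cref{thm:LTmax}.
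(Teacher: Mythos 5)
Your proposal is correct and follows essentially the same route as the paper's own proof: apply the metaselection clause of \cref{thm:LT} with failure parameter $\gamma^{1/c}/K$ to get per-run failure probability $\gamma^{1/c}$, then compose $c$ independent runs via \cref{thm:Atmax} for both the privacy and utility bounds. The extra sanity checks you mention (monotonicity of the success predicate under argmax, and that the output is an actual $\M$-output) are fine but not spelled out in the paper.
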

\begin{proof}
    By \cref{thm:LT},  ${\A_{LT,\gamma^{\frac 1 c}/K}^\M}$ is $3\eps$-DP. \cref{thm:Atmax} now implies that $\A$ is $3c\eps$-DP. For the chosen parameters, the utility bound in \cref{thm:LT} implies that $\Pr[\val(\A_{LT,\gamma^{\frac 1 c}}^\M(d)) \geq q_{1-\tfrac 1 K}(\M(d))] \geq 1-K \cdot (\gamma^{\frac 1 c}/K) = 1-\gamma^{\frac 1 c}$. Since this algorithm is repeated $c$ times in $\A$, the failure probability gets reduced to $\gamma$. Finally, the run time bound follows by combining the run time results in \cref{thm:LT} and \cref{thm:Atmax}.
\end{proof}

\noindent
This result improves on $\ATmax$ in terms of privacy cost and gives the points plotted on~\cref{fig:results}(right). As in the case of repetition, this result is tight up to constants in $c$ all the way up to $c = \log \gamma^{-1}/\log\log \gamma^{-1}$.

\section{Conclusions}
\label{sec:conclusions}

Differentially private repetition is a fundamental problem in the design of differentially private algorithms, and private hyperparameter tuning is of great practical importance. Our work shows new trade-offs between privacy overhead and computational overhead, and our main result is a lower bound showing that there is no general algorithm that can do significantly better. Indeed we show that for constant privacy overhead, the computational overhead must be polynomial in $\frac 1 \gamma$ for some private algorithms.

It is natural to ask if there are reasonable restrictions one can place on the inputs or the private algorithms of interest, for which better repetition theorems and/or hyperparameter tuning algorithms are possible. Such beyond-worst-case results are not uncommon in differential privacy~\citep{NissimRS07, DworkL09, ThakurtaS13, AsiD20}. There are some assumptions that hyperparameter tuning algorithms either implicitly~\citep{SnoekLA12} or explicitly~\citep{HazanKY18} make in the non-private setting and one may ask if such assumptions help with better trade-offs for private hyperparameter search. Even absent computational constraints, the $2\times$ or $3\times$ privacy cost overhead in \Alt\ can be large. Once again, while it is unavoidable in the worst-case, one may hope to design algorithms that do better for non-worst-case instances. We leave these important questions for future work.

\bibliography{refs}

\end{document}